\newcommand{\email}[1]{{\textit{Email:} \texttt{#1}}}
\newcommand{\tmmathbf}[1]{\ensuremath{\boldsymbol{#1}}}
\newcommand{\tmop}[1]{\ensuremath{\operatorname{#1}}}
\newcommand{\tmstrong}[1]{\textbf{#1}}
\newcommand{\tmtextit}[1]{{\itshape{#1}}}
\newenvironment{itemizedot}{\begin{itemize} }{\end{itemize}}
\newenvironment{proof}{\noindent\textbf{Proof\ }}{\hspace*{\fill}$\Box$\medskip}
\newtheorem{corollary}{Corollary}
\newtheorem{definition}[corollary]{Definition}
{\theorembodyfont{\rmfamily}\newtheorem{example}[corollary]{Example}}
\newtheorem{lemma}[corollary]{Lemma}
{\theorembodyfont{\rmfamily}\newtheorem{remark}[corollary]{Remark}}
\newtheorem{theorem}[corollary]{Theorem}
\begin{document}

\title{Renormalization and quantum field theory}\author{R. E.
Borcherds\thanks{This research was supported by a Miller professorship and an
NSF grant. I thank the referees for suggesting many improvements.}\\
Department of Mathematics\\
University of California at Berkeley\\
CA 94720-3840 USA\\ \email{reb@math.berkeley.edu}}\maketitle

\begin{abstract}
  The aim of this paper is to describe how to use regularization and
  renormalization to construct a perturbative quantum field theory from a
  Lagrangian. We first define renormalizations and Feynman measures, and show
  that although there need not exist a canonical Feynman measure, there is a
  canonical orbit of Feynman measures under renormalization. We then construct
  a perturbative quantum field theory from a Lagrangian and a Feynman measure,
  and show that it satisfies perturbative analogues of the Wightman axioms,
  extended to allow time-ordered composite operators over curved spacetimes.
\end{abstract}

\section{Introduction}

We give an overview of the construction of a perturbative quantum field theory
from a Lagrangian. We start by translating some terms in physics into
mathematical terminology.

\begin{definition}
  Spacetime is a smooth finite-dimensional metrizable manifold $M$, together
  with a ``causality'' relation $\leqslant$ that is closed, reflexive, and
  transitive. We say that two points are {\tmstrong{spacelike separated}} if
  they are not comparable, in other words neither $x \leqslant y$ nor $y
  \leqslant x$. \ 
\end{definition}

The causality relation $a \leqslant b$ means informally that $a$ occurs before
$b$. The causality relation will often be constructed in the usual way from a
Lorentz metric with a time orientation, but since we do not use the Lorentz
metric for anything else we do not bother to give $M$ one. The Lorentz metric
will later appear implicitly in the choice of a cut propagator, which is often
constructed using a metric. {\tmstrong{}}

\begin{definition}
  The sheaf of classical fields $\Phi$ is the sheaf of smooth sections of some
  finite dimensional super vector bundle over spacetime. 
\end{definition}

When the sheaf of classical fields is a ``super-sheaf'', one uses the usual
conventions of superalgebra: in particular the symmetric algebras used later
are understood to be symmetric algebras in the superalgebra sense, and the
usual superalgebra minus signs should be inserted into formulas whenever the
order of two terms is exchanged.

As usual, a global section of a sheaf of things is called a thing, so a
classical field $\varphi$ is a global section of the sheaf $\Phi$ of classical
fields, and so on. (A subtle point is sometimes things called classical fields
in the physics literature are better thought of as sections of the
{\tmstrong{dual}} of the sheaf of classical fields; in practice this
distinction does not matter because the sheaf of classical fields usually
comes with a bilinear form giving a canonical isomorphism with its dual.)

\begin{definition}
  The sheaf of derivatives of classical fields or simple fields is the sheaf
  $J \Phi = \tmop{Hom} (J, \Phi)$, where $J$ is the sheaf of jets of $M$ and
  the $\tmop{Hom}$ is taken over the smooth functions on $M$, equal to the \
  inverse limit of the sheaves of jets of finite order of $M$, as in
  {\cite[16.3]{Grothendieck}}.
\end{definition}

\begin{definition}
  The sheaf of (polynomial) Lagrangians or composite fields $SJ \Phi$
  is the symmetric algebra of the sheaf $J \Phi$ of derivatives of classical
  fields.
\end{definition}

Its sections are (polynomial) Lagrangians, in other words polynomial in
fields and their derivations, so for example $\lambda \varphi^4 + m^2
\varphi^2 + \varphi \partial_i^2 \varphi$ is a Lagrangian, but $\sin
(\varphi)$ is not.

Perturbative quantum field theories depend on the choice of a Lagrangian $L$,
which is the sum of a free Lagrangian $L_F$ that is quadratic in the fields,
and an interaction Lagrangian $L_I \in SJ \Phi \otimes
\tmmathbf{C}[[\lambda_1, \ldots, \lambda_n]]$ whose coefficients are
infinitesimal, in other words elements of a formal power series ring
$\tmmathbf{C}[[\lambda_1, \ldots, \lambda_n]]$ over the reals with constant
terms 0.

\begin{definition}
  The sheaf of Lagrangian densities or local actions \ $\omega SJ \Phi
  = \omega \otimes SJ \Phi$ is the tensor product of the sheaf
  $SJ \Phi$ of Lagrangians and the sheaf $\omega$ of smooth densities
  (taken over smooth functions on $M$).
\end{definition}

For a smooth manifold, the (dualizing) sheaf $\omega$ of smooth densities (or
smooth measures) is the tensor product of the orientation sheaf with the sheaf
of differential forms of highest degree, and is non-canonically isomorphic to
the sheaf of smooth functions. Densities are roughly ``things that can be
locally integrated''. For example, if $M$ is oriented, then \ $(\lambda
\varphi^4 + m^2 \varphi^2 + \varphi \partial_i^2 \varphi) d^n x$ is a
Lagrangian density.

We use $\Gamma$ and $\Gamma_c$ to stand for spaces of global and compactly
supported sections of a sheaf. These will \ usually be spaces of smooth
functions (or compactly supported smooth functions) in which case they are
topologized in the usual way so that their duals are compactly supported
distributions (or distributions) taking values in some sheaf.

\begin{definition}
  A (non-local) action is a polynomial in local actions, in other words an
  element of the symmetric algebra $S \Gamma \omega SJ \Phi$ of the
  real vector space $\Gamma \omega SJ \Phi$ of \ local actions.
\end{definition}

We do not complete the symmetric algebra, so expressions such as $e^{i \lambda
L}$ are not in general non-local actions, unless we work over some base ring
in which $\lambda$ is nilpotent.

We will use $\ast$ for complex conjugation and for the antipode of a Hopf
algebra and for the adjoint of an operator and for the anti-involution of a
$*$-algebra. The use of the same symbol for all of these is deliberate and  indicates that they are all really special cases of a universal ``adjoint'' or ``antipode'' operation that acts on everything: whenever two of these operations are defined on something they are equal, so can all be denoted by the same symbol. 

The quantum field theories we construct depend on the choice of a \ cut
propagator $\Delta$ that is essentially the same as the 2-point Wightman
distribution
\[ \Delta (\varphi_1, \varphi_2) = \int_{x, y} \langle 0| \varphi_1 (x)
   \varphi_2 (y) |0 \rangle dx dy \]
\begin{definition}
  A propagator $\Delta$ is a continuous bilinear map $\Gamma_c \omega \Phi
  \times \Gamma_c \omega \Phi \rightarrow \tmmathbf{C}$.
  \begin{itemizedot}
    \item $\Delta$ is called {\tmstrong{local}} if $\Delta (f, g) = \Delta (g,
    f)$ whenever the supports of $f$ and $g$ are spacelike separated.
    
    \item $\Delta$ is called {\tmstrong{Feynman}} if it is symmetric: $\Delta
    (f, g) = \Delta (g, f)$.
    
    \item $\Delta$ is called {\tmstrong{Hermitian}} if $\Delta^{\ast} =
    \Delta$, where $\Delta^{\ast}$ is defined by $\Delta^{\ast} (f^{\ast},
    g^{\ast}) = \Delta (g, f)^{\ast}$ (with a change in order of $f$ and $g$).
    
    \item $\Delta$ is called {\tmstrong{positive}} if $\Delta (f^{\ast}, f)
    \geqslant 0$ for all $f$.
    
    \item $\Delta$ is called \ {\tmstrong{cut}} if it satisfies the following
    ``positive energy'' condition: at each point $x$ of $M$ there is a partial
    order on the cotangent space defined by a proper closed convex cone $C_x$,
    \ such that if $(p, q)$ is in the wave front set of $\Delta$ at some point
    $(x, y) \in M^2$ \ then $p \leqslant 0$ and $q \geqslant 0$. Also, as a
    distribution, $\Delta$ can be written in local coordinates as a boundary
    value of something in the algebra generated by smooth functions and powers
    and logarithms of polynomials (the boundary values taken so that the wave
    front sets lie in the regions specified above). Moreover if $x = y$ then
    $p + q = 0$.
  \end{itemizedot}
\end{definition}

A propagator can also be thought of as \ a complex distribution on $M
\times M$ taking values in the dual of the external tensor product $J
\Phi \boxtimes J \Phi$. In particular it has a wave front set (see
H\"ormander {\cite{Hormander}}) at each point of $M^2$, which is a
cone in the imaginary cotangent space of that point. If $A$ and $B$
are in $\Gamma_c \Phi$, then $\Delta (A, B)$ is defined to be a
compactly supported distribution on $M \times M$, defined by \ $\Delta
(A, B) (f, g) = \text{$\Delta (Af, Bf)$}$ for $f$ and $g$ in $\Gamma
\omega$.

The key point in the definition of a cut propagator is the condition on the
wave front sets, which distinguishes the cut propagators from other
propagators such as Feynman propagators or advanced and retarded propagators
that can have more complicated wave front sets. For most common cut
propagators in Minkowski space, this follows from the fact that their Fourier
transforms have support in the positive cone. The condition about being
expressible in terms of smooth functions and powers and logs of polynomials is
a minor technical condition that is in practice satisfied by almost any
reasonable example, and is used in the proof that Feynman measures exist.

If $(p_1, \ldots p_n)$ is in the imaginary cotangent space of a point of
$M^n$, then we write $(p_1, \ldots p_n) \geqslant 0$ if $p_j \geqslant 0$ for
all $j$, and call it positive if it is not zero. \

\begin{example}
  Over Minkowski space, most of the usual cut propagators are positive (except
  for ghost fields), local, and Hermitian. Most of the ideas for the proof of
  this can be seen for the simplest case of the propagator for massive
  Hermitian scalar fields. Using translation invariance, we can write $\Delta
  (x, y) = \Delta (x - y)$ for some distribution $\Delta$ on Minkowski
  spacetime. Then the Fourier transform of this in momentum space is a
  rotationally invariant measure supported on one of the two components of
  vectors with $p^2 = m^2$. This propagator is positive because the measure in
  momentum space is positive. It satisfies the wave front set part of the cut
  condition because the Fourier transform has support in the positive cone,
  and explicit calculation shows that it can be written in terms of powers and
  logs of polynomials. It satisfies locality because it is invariant under
  rotations that preserve the direction of time, and under such rotations any
  space-like vector is conjugate to its negative, so $\Delta (x) = \Delta (-
  x)$ whenever $x$ is spacelike, in other words $\Delta (x, y) = \Delta (y,
  x)$ whenever $x$ and $y$ are spacelike separated. The corresponding Feynman
  propagator is given by $1 / (p^2 + m^2 + i \varepsilon)$ where the $i
  \varepsilon$ indicates in which direction one integrates around the poles,
  so the cut propagator is just the residue of the Feynman propagator along
  one of the 2 components of the 2-sheeted hyperboloid \ $p^2 = m^2$.
  
  For other fields such as spinor fields in Minkowski space, the sheaf of
  classical fields will usually be some sort of spin bundle. The propagators
  can often be expressed in terms of the \ the propagator for a scalar field
  by acting on it with polynomials in momentum multiplied by Dirac's gamma
  matrices $\gamma^{\mu}$, for example $i (\gamma^{\mu} p_{\mu} + m) / (p^2 -
  m^2)$. Unfortunately there are a bewildering number of different notational
  and sign conventions for gamma matrices.
\end{example}

Compactly supported actions give functions on the space $\Gamma \Phi$ of
smooth fields, by integrating over spacetime $M$. \ A Feynman measure is a
sort of analogue of Haar measure on a finite dimensional real vector space. We
can think of a Haar measure as an element of the dual of the space of
continuous compactly supported functions. For infinite dimensional vector
spaces there are usually not enough continuous compactly supported functions,
but instead we can define a measure to be an element of the dual of some other
space of functions. We will think of Feynman measures as something like
elements of the dual of all functions that are given by free field Gaussians
times a compactly supported action. In other words a Feynman measure should
assign a complex number to each compactly supported action, formally
representing the integral over all fields of this action times a Gaussian
$e^{iL_F}$, where we think of the action as a function of classical fields (or
rather sections of the dual of the space of classical fields, which can
usually be identified with classical fields). Moreover the Feynman measure
should satisfy some sort of analogue of translation invariance.

The space $e^{iL_F} S \Gamma_c \omega SJ \Phi$ is a free rank 1 module over $S
\Gamma_c \omega SJ \Phi$ generated by the basis element $e^{iL_F}$, which can
be thought of either as a formal symbol or a formal power series. Its elements
can be thought of as representing functions of classical fields that are given
by a polynomial times the Gaussian \ $e^{iL_F}$, and will be the functions
that the Feynman measure is defined on. The symmetric algebra $S \Gamma_c
\omega SJ \Phi$ is topologized as the direct sum of the spaces $S^n \Gamma_c
\omega SJ \Phi$, each of which is toplogized by regarding it as a space of
smooth test functions over $M^n$.

For the definition of a Feynman measure we need to  extend the propagator
$\Delta$ to a larger space as follows.  We think of the propagator $\Delta$ as a
map taking $\Gamma_cJ\Phi\otimes \Gamma_cJ\Phi$ to distributions on
$M\times M$.  We then extend it a map from $\Gamma_c S J\Phi \times
\Gamma_c S J \Phi $ to distributions on $M\times M$ by putting
$\Delta(a_1\cdots a_n,b_1\cdots b_n) =\sum_{\sigma\in S_n}
\Delta(a_1,b_{\sigma(1)})\times\cdots\times\Delta(a_1,b_{\sigma(n)})$
where the sum is over all elements of the symmetric group $S_n$ (and
defining it to be 0 for arguments of different degrees). Finally we
extend it to a map from $S^m\Gamma_c S J\Phi \times S^n\Gamma_c S J
\Phi $ to distributions on $M^m\times M^n$ using the ``bicharacter'' property: in other words $\Delta(AB,C) = \sum\Delta(A,C')\Delta(B,C'')$ where the coproduct of $C$ is $\sum C'\otimes C''$, and similarly for $\Delta(A,BC)$. 

\begin{definition}
  \label{feynman measure}A Feynman measure is a continuous linear map $\omega
  : e^{iL_F} S \Gamma_c \omega SJ \Phi \rightarrow \tmmathbf{C}$. The Feynman
  measure is said to be associated with the propagator $\Delta$ if it
  satisfies the following conditions:
  \begin{itemize}
    \item Smoothness on the diagonal: Whenever $(p_1, \ldots, p_n)$ is in the
    wave front set of $\omega$ at the point $(x, \ldots, x)$ on the diagonal,
    then $p_1 + \ldots + p_n = 0$
    
    \item Non-degeneracy: there is a smooth nowhere-vanishing function $g$ so
    that $\omega (e^{iL_F} v)$ is \ $\int_M gv$ for $v$ in $\Gamma_c \omega
    S^0 J \Phi = \Gamma_c \omega$.
    
    \item Gaussian condition, or weak translation invariance: For \ $A\in S^m\Gamma_c \omega SJ \Phi$, $B\in S^n\Gamma_c \omega SJ \Phi$,
    with both sides interpreted as distributions on $M^{m+n}$,
    \[ \text{$\omega (AB) = \sum \omega (A') \Delta (A_{}'', B'') \omega
       (B')$} \]
    whenever there is no element in the support of $A_{}$ that is $\leqslant$
    some element of the support of \ $B_{}$. Here $\sum A' \otimes A'' \in S
    \Gamma_c \omega SJ \Phi \otimes S\Gamma_c\text{$SJ \Phi$}$ is the
    image of $A$ under the map $S^m \Gamma_c \omega SJ \Phi
    \rightarrow S^m \Gamma_c \omega SJ \Phi \otimes S^m\Gamma_c\text{$SJ
    \Phi$}$ induced by the coaction $ \omega SJ \Phi
    \rightarrow  \omega SJ \Phi \otimes \text{$SJ
    \Phi$}$ of $SJ \Phi$ on $ \omega SJ \Phi$, and
    similarly for $B$. The product on the right is  a product of
    distributions, using the extended version of $\Delta$ defined just before this definition. 
  \end{itemize}
\end{definition}

We explain what is going on in this definition. We would like to
define the value of the Feynman measure to be a sum over Feynman
diagrams, formed by joining up pairs of fields in all possible ways by
lines, and then assigning a propagator to each line and taking the
product of all propagators of a diagram. This does not work because of
ultraviolet divergences: products of propagators need not be defined
when points coincide. If these products were defined then they would
satisfy the Gaussian condition, which then says roughly that if the
vertices are divided into two disjoint subsets $a$ and $b$, then a
Feynman diagram can be divided into a subdiagram with vertices $a$, a
subdiagram with vertices $b$, and some lines between $a$ and $b$. The
value $\omega(AB)$ of the Feynman diagram would then be the product of
its value $\omega(A')$ on $a$, the product $\Delta(A'',B'')$ of all
the propagators of lines joining $a$ and $b$, and its value
$\omega(B')$ on $b$. The Gaussian condition need not make sense if
some point of $a$ is equal to some point of $b$ because if these
points are joined by a line then the corresponding propator may have a
bad singularity, but does make sense whenever all points of $a$ are
not $\le$ all points of $b$. The definition above says that a Feynman
measure should at least satisfy the Gaussian condition in this case,
when the product is well defined.

Unfortunately the standard notation $\omega$ for a dualizing sheaf, such as
the sheaf of densities, is the same as the standard notation $\omega$ for a
state in the theory of operator algebras, which the Feynman measure will be a
special case of. It should be clear from the context which meaning of $\omega$
is intended. \

If $\omega$ is a Feynman measure and $A \in e^{iL_F} S^n \Gamma_c \omega
SJ \Phi$ then $\omega (A)$ is a complex number, but can also be
considered as the compactly supported density on $M^n$ taking a smooth $f$ to
$\omega (A) (f) = \omega (Af)$. The integral of this density $\omega (A)$ over
spacetime is just the complex number $\omega (A)$.

Since $e^{iL_F} S \Gamma_c \omega SJ \Phi$ is a coalgebra (where
elements of $\Gamma_c \omega SJ \Phi$ are primitive and $e^{iL_F}$ is
group-like), the space of Feynman measures is an algebra, whose product is
called convolution.

The non-degeneracy condition just excludes some uninteresting degenerate
cases, such as the measure that is identically zero, and the function $g$
appearing in it is usually normalized to be 1. The condition about smoothness
on the diagonal implies that the product on the right in the Gaussian
condition is defined. This is because $\omega$ has the property that if an
element $(p_1, \ldots p_n)$ of the wave front set of some point is nonzero
then its components cannot all be positive and cannot all be negative. This
shows that the wave front sets are such that the product of distributions is
defined.

If $A$ is in $e^{iL_F} S \Gamma_c \omega SJ \Phi$, then $\omega (A)$
can be thought of as a Feynman integral
\[ \omega (A) = \int A (\varphi)\mathcal{D} \varphi \]
where $L_F$ is a quadratic action with cut propagator $\Delta$, and where $A$
is considered to be a function of fields $\varphi$. The integral is formally
an integral over all classical fields. The Gaussian condition is a weak form
of translation invariance of this measure under addition of classical fields.
Formally, translation invariance is equivalent to the Gaussian condition with
the condition about supports omitted and cut propagators replaced by Feynman
propagators, but this is not well defined because the Feynman propagators can
have such bad singularities that their products are sometimes not defined when
two spacetime points coincide.

The Feynman propagator $\Delta_F$ of a Feynman measure $\omega$ is defined to
be the restriction of $\omega$ to $\Gamma_c \omega \Phi \times \Gamma_c \omega
\Phi$. It is equal to the cut propagator at ``time-ordered'' points $(x, y)
\in M^2$ where $x \nleqslant y$, but will usually differ if $x \leqslant y$.
As it is symmetric, it is determined by the cut propagator except on the
diagonal of $M \times M$. Unlike cut propagators, Feynman propagators may have
singularities on the diagonal whose wave front sets are not contained in a
proper cone, so that their products need not be defined.

Any symmetric algebra $\tmop{SX}$ over a module $X$ has a natural structure
of a commutative and cocommutative Hopf algebra, with the coproduct defined by
making all elements of $X$ primitive (in other words, $\Delta x = x \otimes 1
+ 1 \otimes x$ for $x \in X$). In other words, $\tmop{SX}$ is the coordinate
ring of a commutative affine group scheme whose points form the dual of $X$
under addition. \ For general results about Hopf algebra see Abe {\cite{Abe}}.
Similarly $SJ \Phi$ is a sheaf of commutative cocommutative Hopf
algebras, with a coaction on itself and the trivial coaction on $\omega$, and
so has a coaction on $S \omega SJ \Phi$, preserving the coproduct of $S
\omega SJ \Phi$. It corresponds to the sheaf of commutative affine
algebraic groups whose points correspond to the sheaf $J \Phi$ under addition.

\begin{definition}
  A renormalization is an \ automorphism of $S \omega SJ \Phi$
  preserving its coproduct and the coaction of $SJ \Phi$. The group of
  renormalizations is called the ultraviolet group.
\end{definition}

\ The justification for this rather mysterious definition is theorem
\ref{transitive}, which shows that renormalizations act simply transitively on
the Feynman measures associated to a given local cut propagator. In other
words, although there is no canonical Feynman measure on the space of
classical fields, there is a canonical orbit of such measures under
renormalization.

More generally, renormalizations are global sections of the sheaf of
renormalizations (defined in the obvious way), but we will make no use of this
point of view.

The (infinite dimensional) ultraviolet group really ought to be called
the ``renormalization group'', but unfortunately this name is already
used for a quite different 1-dimensional group.  The
``renormalization group'' is the group of positive real numbers,
together with an action on Lagrangians by ``renormalization group flow''.
The relation between the renormalization group and the ultraviolet
group is that the renormalization group flow can be thought of as a
non-abelian 1-cocycle of the renormalization group with values in the
ultraviolet group, using the action of renormalizations on Lagrangians
that will be constructed later.

The ultraviolet group is indirectly related to the Hopf algebras of Feynman
diagrams introduced by Kreimer {\cite{Kreimer}} and applied to renormalization
by him and Connes {\cite{Connes}}, though this relation is not that easy to
describe. First of all their Hopf algebras correspond to Lie algebras, and the
ultraviolet group has a Lie algebra, and these two Lie algebras are related.
There is no direct relation between Connes and Kreimer's Lie agebras and the
Lie algebra of the ultraviolet group, in the sense that there seems to be no
natural homomorphism in either direction. However there seems to be a sort of
intermediate Lie algebra that has homomorphisms to both. This intermediate Lie
algebra (or group) can be defined using Feynman diagrams decorated with smooth
test functions rather than the sheaf $S \omega SJ \Phi$ used here.
Unfortunately all my attempts to explain the product of this Lie algebra
explicitly have resulted in an almost incomprehensible combinatorial mess so
complicated that it is unusable. Roughly speaking, the main differences
between the ultraviolet group and the intermediate Lie algebra is that the Lie
algebra of the ultraviolet group amalgamates all Feynman diagrams with the
same vertices while the intermediate Lie algebra algebra keeps track of
individual Feynman diagrams, and the main difference between the intermediate
Lie algebra and Kreimer's algebra is that \ the intermediate Lie algebra is
much fatter than Kreimer's algebra because it has infinite dimensioinal spaces
of smooth functions in it. In some sense Kreimer's algebra could be thought of
as a sort of skeleton of the intermediate Lie algebra. \

All reasonable Feynman measures for a given free field theory are equivalent
up to renormalization, but it is not easy to show that at least one exists. We
do this by following the usual method of constructing a perturbative quantum
field theory in physics. We \ first regularize the cut local propagator which
produces a meromorphic family of Feynman measures, following Etingof
{\cite{Etingof}} in using Bernstein's theorem {\cite{Bernstein}} on the
analytic continuation of powers of a polynomial to construct the
regularization. We then use an infinite renormalization to eliminate the poles
of the regularized Feynman measure in order of their complexity.

A quantum field theory satisfying the Wightman axioms {\cite[section
3.1]{Streater}} is determined by its Wightman distributions, which are given
by linear maps $\omega_n : T^n \Gamma_c \omega \Phi \rightarrow \tmmathbf{C}$
from the tensor powers of the space of test functions for each $n$. We will
follow H. J. Borchers {\cite{Borchers}} in combining the Wightman
distributions into a Wightman functional $\omega : \text{$T \Gamma_c \omega
\Phi$} \rightarrow \tmmathbf{C}$ on the tensor algebra $T \Gamma_c \omega
\Phi$ of the space $\Gamma_c \omega \Phi$ of test functions (which is
sometimes called a Borchers algebra or Borchers-Uhlmann algebra or
BU-algebra). In order to accommodate composite operators we extend the algebra
$T \Gamma_c \omega \Phi$ to the larger algebra $T \Gamma_c \omega SJ
\Phi$, and to accommodate time ordered operators we extend it further to
$\tmop{TS} \Gamma_c \omega SJ \Phi$. In this set up it is clear how to
accommodate perturbative quantum field theories: we just allow $\omega$ to
take values in a space of formal power series
$\tmmathbf{C}[[\tmmathbf{\lambda}]] =\tmmathbf{C}[[\lambda_1, \lambda_2,
\ldots]]$ rather than $\tmmathbf{C}$. \ For regularization $\omega$ sometimes
takes values in a ring of meromorphic functions. There is one additional
change we need: it turns out that the elements of $\Gamma_c \omega SJ
\Phi$ do not really represent operators on a space of physical states, but are
better thought of as operators that map a space of incoming states to a space
of outgoing states, and vice versa. If we identify the space of incoming
states with the space of physical states, this means that only products of an
even number of operators of $S \Gamma_c \omega SJ \Phi$ act on the
space of physical states. So the functional defining a quantum field theory is
really defined on the subalgebra $T_0 S \Gamma_c \omega SJ \Phi$ of
even degree elements.

So the main goal of this paper is to construct a linear map from $T_0 S
\Gamma_c \omega SJ \Phi$ to $\tmmathbf{C}[\tmmathbf{\lambda}]$ from a
given Lagrangian, and to check that it satisfies analogues of the Wightman
axioms.

The space of physical states of the quantum field theory can be reconstructed
from $\omega$ as follows.

\begin{definition}
  Let \ $\omega : T \rightarrow C$ be a $\tmmathbf{R}$-linear map between real
  $*$-algebras.
  \begin{itemizedot}
    \item $\omega$ is called Hermitian if $\omega^{\ast} = \omega$, where
    $\omega^{\ast} (a^{\ast}) = \omega (a)^{\ast}$
    
    \item $\omega$ is called positive if it maps positive elements to positive
    elements, where an element of a $*$-algebra is called positive if it is a
    finite sum of elements of the form $a^{\ast} a$.
    
    \item $\omega$ is called a state if it is positive and normalized by
    $\omega (1) = 1$
    
    \item The left, right, or 2-sided kernel of $\omega$ is the largest left,
    right or 2-sided ideal closed under * on which $\omega$ vanishes.
    
    \item The space of physical states of $\omega$ is the quotient of $T$ by
    the left kernel of $\omega$. Its sesquilinear form is $\left\langle a, b
    \right\rangle = \omega (a^{\ast} b)$, and its vacuum vector is the image
    of $1$.
    
    \item The algebra of physical operators of $\omega$ is the quotient of $T$
    by the 2-sided kernel of $\omega$.
  \end{itemizedot}
\end{definition}

The algebra of physical operators is a $*$-algebra of operators with a left
action on the physical states. If $\omega$ is positive or Hermitian then so is
the sesquilinear form $\left\langle, \right\rangle$. When $\omega$ is
Hermitian and positive and $C$ is the complex numbers the left kernel of
$\omega$ is the set of vectors $a$ with $\omega (a^{\ast} a) = 0$, and the
definition of the space of physical states is essentially the GNS construction
and is also the main step of the Wightman reconstruction theorem. In this case
the completion of the space of physical states is a Hilbert space.

The maps $\omega$ we construct are defined on the real vector space $T_0 S
\Gamma_c \omega SJ \Phi$ and will initially be $\tmmathbf{R}$-linear.
It is often convenient to extend them to be
$\tmmathbf{C}[[\tmmathbf{\lambda}]]$-linear maps defined on $T_0 S \Gamma_c
\omega SJ \Phi \otimes \tmmathbf{C}[[\tmmathbf{\lambda}]]$, in which
case the corresponding space of physical states will be a module over
$\tmmathbf{C}[[\tmmathbf{\lambda}]]$ and its bilinear form will be
sesquilinear over $\tmmathbf{C}[[\tmmathbf{\lambda}]]$.

The machinery of renormalization and regularization has little to do with
perturbation theory or the choice of Lagrangian: instead, it is needed even
for the construction of free field theories if we want to include composite
operators. The payoff for all the extra work needed to construct the composite
operators in a free field theory comes when we construct interacting field
theories from free ones. The idea for constructing an interacting field theory
from a free one is simple: we just apply a suitable automorphism (or
endomorphism) of the algebra $T_0 S \Gamma_c \omega SJ \Phi$ to the
free field state $\omega$ to get a state for an interacting field. For
example, if we apply an endomorphsim of the sheaf $\omega SJ \Phi$ then
we get the usual field theories of normal ordered products of operators, which
are not regarded as all that interesting. For any Lagrangian $L$ there is an
infinitesimal automorphism of $T_0 S \Gamma_c \omega SJ \Phi$ that just
multiplies elements of $S \Gamma_c \omega SJ \Phi$ by $iL$, which we
would like to lift to an automorphism $e^{iL}$. The construction of an
interacting quantum field theory from a Feynman measure $\omega$ and a
Lagrangian $L$ is then given by the natural action $e^{- iL} \omega$ of the
automorphism $e^{- iL}$ on the state $\omega$. The problem is that $e^{iL_I}$
is only defined if the interaction Lagrangian has infinitesimal coefficients,
due to the fact that we only defined $\omega$ on polynomials times a Gaussian,
so this construction only produces perturbative quantum field theories taking
values in rings of formal power series. \ This is essentially the problem of
lifting a Lie algebra elements $L_I$ to a group element $e^{iL_I}$, which is
trivial for operators on finite dimensional vector spaces, but a subtle and
hard problem for unbounded operators such as $L_I$ that are not self adjoint.
This construction works provided the interacting part of the Lagrangian not
only has infinitesimal coefficients but also has compact support. We show that
the more general case of Lagrangians without compact support can be reduced to
the case of compact support up to inner automorphisms, at least on globally
hyperbolic spacetimes, by showing that infra-red divergences cancel.

Up to isomorphism, the quantum field theory does not depend on the choice of
Feynman measure or Lagrangian, but only on the choice of propagator. In
particular, the interacting quantum field theory is isomorphic to a free one.
This does not mean that interacting quantum field theories are trivial,
because this isomorphism does not preserve the subspace of simple operators,
so if one only looks at the restriction to simple operators, as in the
Wightman axioms, one no longer gets an isomorphism between free and
interacting theories. The difference between interacting and free field
theories is that one chooses a different set of operators to be the ``simple''
operators corresponding to physical fields.

The ultraviolet group also has a non-linear action on the space of
infinitesimal Lagrangians. A quantum field theory is determined by the choice
of a Lagrangian and a Feynman measure, and this quantum field theory is
unchanged if the Feynman measure and the Lagrangian are acted on by the same
renormalization. This shows why the choice of Feynman measure is not that
important: if one chooses a different Feynman measure, it is the image of the
first by a unique renormalization, and by applying this renormalization to the
Lagrangian one still gets the same quantum field theory.

\ Roughly speaking, we show that these quantum field theories $e^{iL_I}
\omega$ satisfy the obvious generalizations of Wightman axioms whenever it is
reasonable to expect them to do so. For example, we will show that locality
holds by showing that the state vanishes on the ``locality ideal'' of
definition \ref{localityideal}, the quantum field theory is Hermitian if we
start with Hermitian cut propagators and Lagrangians, and we get a (positive)
state if we start with a positive (non-ghost) cut propagator. We cannot expect
to get Lorentz invariant theories in general as we are working over a curved
spacetime, but if we work over Minkowski space and choose Lorentz invariant
cut propagators \ then we get Lorentz invariant free quantum field theories.
In the case of interacting theories Lorentz invariance is more subtle, even if
the Lagrangian is Lorentz invariant. Lorentz invariance depends on the
cancellation of infra-red divergences as we have to approximate the Lorentz
invariant Lagrangian by non Lorentz invariant Lagrangians with compact
support, and we can only show that infra-red divergences cancel up to inner
automorphisms. This allows for the possibility that the vacuum is not Lorentz
invariant, in other words Lorentz invariance may be spontaneously broken by
infra-red divergences, at least if the theory has massless particles. (It
seems likely that if there are no massless particles then infra-red
divergences cancel and we recover Lorentz invariance, but I have been too lazy
to check this in detail.)

In the final section we discuss anomalies. Fujikawa {\cite{Fujikawa}} observed
that anomalies arise from the lack of invariance of Feynman measures under a
symmetry group, and we translate his observation into mathematical language.

The definitions above generalize to the relative case where spacetime is
replaced by a morphism $X \rightarrow Y$, whose fibers can be thought of as
spacetimes parameterized by $Y$. For example, the sheaf of densities $\omega$
is replaced by the dualizing sheaf or complex $\omega_{X / Y}$. We will make
no serious use of this generalization, though the section on regularization
could be thought of as an example of this where $Y$ is the spectrum of a ring
of meromorphic functions.

\section{The ultraviolet group}

We describe the structure of the ultraviolet group, and show that it acts
simply transitively on the Feynman measures associated with a given
propagator.

\begin{theorem}
  \label{uvgroupstructure}The map taking a renormalization $\rho : S \omega
  SJ \Phi \rightarrow S \omega SJ \Phi$ to its composition with
  the natural map \ $S \omega SJ \Phi \rightarrow S^1 \omega S^0 J \Phi
  = \omega$ identifies renormalizations with the elements of $\tmop{Hom} (S^{}
  \omega SJ \Phi, \omega)$ that vanish on $S^0 \omega SJ \Phi$
  and that are isomorphisms when restricted to $\omega = S^1 \omega S^0 J
  \Phi$.
\end{theorem}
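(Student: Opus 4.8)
The plan is to reorganize everything inside the symmetric monoidal category $\mathcal{C}$ of comodules over the Hopf algebra $H = SJ\Phi$, and then apply two universal properties: the cofreeness of the symmetric coalgebra and the co-freeness (co-induction) of the comodule $\omega SJ\Phi = \omega\otimes H$. Write $V = \omega SJ\Phi$, so that $S\omega SJ\Phi = SV$ is, by the remarks preceding the definition of renormalization, a graded-connected cocommutative coalgebra object of $\mathcal{C}$ (its coproduct is $H$-colinear and $S^0 = \mathbf{R}$). A renormalization is exactly a coproduct-preserving, $H$-colinear automorphism of $SV$, i.e.\ a coalgebra automorphism of $SV$ internal to $\mathcal{C}$. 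Let $p : SV \to V = S^1 V$ be the projection onto cogenerators and $\bar{\epsilon} = \tmop{id}_\omega\otimes\epsilon_H : V = \omega\otimes H\to\omega$ the map induced by the counit of $H$; the natural map of the theorem is $\bar\epsilon\circ p$, and the claim is that $\rho\mapsto\bar\epsilon\,p\,\rho$ is the asserted bijection.

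First I would establish the bijection between coalgebra endomorphisms and their corestrictions. In characteristic zero the symmetric coalgebra $SV$ is the cofree conilpotent cocommutative coalgebra cogenerated by $V$, with universal map $p$; since $SV$ is graded connected this holds internally in $\mathcal{C}$. Conilpotency (the iterated coproduct $\Delta^{(n)}$ kills $S^k V$ for $n>k$) makes the reconstruction map well defined with no convergence issues: to an $H$-colinear $f : SV\to V$ with $f|_{S^0}=0$ one associates the coalgebra endomorphism $\rho = \sum_{n\ge 0}\tfrac1{n!}\,\mu_n\circ f^{\otimes n}\circ\Delta^{(n)}$, where $\mu_n : V^{\otimes n}\to S^n V$ is the symmetrized product, and $f\mapsto\rho$, $\rho\mapsto p\rho$ are mutually inverse. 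Thus coalgebra endomorphisms of $SV$ in $\mathcal{C}$ correspond bijectively to $H$-colinear maps $f : SV\to V$ vanishing on $S^0$.

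Next I would peel off the comodule structure of the target. Because $V = \omega\otimes H$ is the co-induced (cofree) $H$-comodule on $\omega$, the co-induction adjunction gives $\tmop{Hom}^H(SV, V)\cong\tmop{Hom}(SV,\omega)$, an $H$-colinear $f$ corresponding to $r = \bar\epsilon\circ f$ and reconstructed as $f = (r\otimes\tmop{id}_H)\circ\beta_{SV}$ from the coaction $\beta_{SV}$. Composing the two bijections, coalgebra endomorphisms of $SV$ in $\mathcal{C}$ are in bijection with linear maps $r : SV\to\omega$ with $r|_{S^0}=0$, the correspondence being precisely $r = \bar\epsilon\,p\,\rho$. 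This is the content of the theorem except for the automorphism condition.

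Finally I would pin down which $r$ give automorphisms. A coalgebra endomorphism preserves the coradical (here, the grading) filtration, and a short computation with the exponential formula shows its associated graded is $S(\phi)$, where $\phi = p\rho|_V\in\tmop{End}^H(V)$ is the degree-one part; hence $\rho$ is invertible iff $\phi$ is. It remains to see that $\phi$ is invertible iff its constant term $r|_\omega = \bar\epsilon\,\phi|_{\omega\otimes 1}\in\tmop{End}(\omega)$ is. This is where the connectedness of $H = SJ\Phi$ enters: the colinear endomorphisms of the co-induced comodule $\omega\otimes H$ form, via co-induction, a convolution algebra isomorphic to $\tmop{Hom}(\omega\otimes H,\omega)$ that is complete for the filtration by the augmentation ideal $H_+ = \bigoplus_{k\ge1}S^k J\Phi$; since $H_+$ is pro-nilpotent for convolution, an element is invertible precisely when its constant term is, so $\phi$ is invertible iff $r|_\omega$ is an isomorphism of $\omega$. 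I expect the main obstacle to be exactly the bookkeeping around these universal properties: making the cofreeness of $SV$ precise and checking it is compatible with (enriched over) the $H$-comodule structure, and carrying out the associated-graded computation that reduces invertibility of $\rho$ to invertibility of the single map $r|_\omega$. Everything else is formal once the problem is placed in the comodule category.
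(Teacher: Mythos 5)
Your proposal is correct and takes essentially the same approach as the paper: the paper proves the dual statement --- that $C$-equivariant commutative-algebra endomorphisms of $S(\omega\otimes C)$ correspond to $\tmop{Hom}(\omega, S\omega C)$ via the free-module and free-symmetric-algebra universal properties --- and then dualizes, while you run the two dual universal properties (the co-induced comodule $\omega\otimes H$ and the cofree conilpotent cocommutative coalgebra $SV$) directly on the comodule side. Your added detail on invertibility (associated graded equal to $S(\phi)$, then pro-nilpotence of convolution along $H_+=\bigoplus_{k\geqslant 1}S^k J\Phi$) simply fleshes out what the paper asserts in one sentence, namely that invertibility is detected on $S^1\omega S^0 J\Phi$.
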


\begin{proof}
  This is a variation of the dual of the fact that endomorphisms $\rho$ of a
  polynomial ring $R [x]$ correspond to polynomials $\rho (x)$, given by the
  image of the polynomial $x$ under the endomorphism $\rho$. It is easier to
  understand the dual result first, so suppose that $C$ is a cocommutative
  Hopf algebra and $\omega$ is a vector space (with $C$ acting trivially on
  $\omega$). Then the symmetric algebra $S \omega C = S (\omega \otimes C)$ is
  a commutative algebra acted on by $C$, and its endomorphisms (as a
  commutative algebra) correspond exactly to elements of $\tmop{Hom} (\omega,
  S \omega C)$ because any such map lifts uniquely to a $C$-invariant map from
  $\omega$ to $\omega C$, which in turn lifts to a unique algebra homomorphism
  from $S \omega C$ to itself by the universal property of symmetric algebras.
  This endomorphism is invertible if and only if the map from $\omega$ to
  $\omega = S^1 \omega C^0$ is invertible, where $C^0$ is the vector space
  generated by the identity of $C$.
  
  To prove the theorem, we just take the dual of this result, with $C$ now
  given by $SJ \Phi$. There is one small modification we need to make
  in taking the dual result: we need to add the condition that the element of
  $\tmop{Hom} (S \omega C, \omega)$ vanishes on $S^0 \omega C$ in order to get
  an endomorphism of $S \omega C$; this is related to the fact that
  endomorphisms of the polynomial ring $R [x]$ correspond to polynomials, but
  continuous endomorphisms of the power series ring $R [[x]]$ correspond to
  power series with vanishing constant term.
\end{proof}

The ultraviolet group preserves the increasing filtration $S^{\leqslant m}
\omega SJ \Phi$ and so has a natural decreasing filtration by the
groups $G_{\geqslant n}$, consisting of the renormalizations that fix all
elements of $S^{\leqslant n} \omega SJ \Phi$. The group $G =
G_{\geqslant 0}$ is the inverse limit of the groups $G / G_{\geqslant n}$, and
the commutator of $G_{\geqslant m}$ and $G_{\geqslant n}$ is contained in
$G_{\geqslant m + n}$, so in particular $G_{\geqslant 1}$ is an inverse limit
of nilpotent groups $G_{\geqslant 1} / G_{\geqslant n}$. The group \
$G_{\geqslant n}$ is a semidirect product $G_{\geqslant n + 1} G_n$ of its
normal subgroup $G_{\geqslant n + 1}$ with the group $G_n$, consisting of
elements represented by elements of $\tmop{Hom} (S^{} \omega SJ \Phi,
\omega)$ that are the identity on $S^1 \omega SJ \Phi$ if $n > 0$, and
vanish on $S^m \omega SJ \Phi$ for $m > 1$, $m \neq n + 1$.

\begin{lemma}
  The group $G$ is $\ldots G_2 G_1 G_0$ in the sense that any element of $G$
  can be written uniquely as an infinite product $\ldots g_2 g_1 g_0$ with
  $g_i \in G_i$, and conversely any such infinite product converges to an
  element of $G$.
\end{lemma}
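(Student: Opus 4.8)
The plan is to treat $G$ as a complete filtered (pro-nilpotent) group and extract the factorization by successive approximation, peeling off one factor $g_n \in G_n$ at each level of the filtration $G_{\geqslant n}$. I will use freely the two facts recorded just before the lemma: that $G = G_{\geqslant 0}$ is the inverse limit of the quotients $G / G_{\geqslant n}$, and that each $G_{\geqslant n}$ is the semidirect product $G_{\geqslant n+1} G_n$ with normal factor $G_{\geqslant n+1}$, so that the splitting supplies a canonical isomorphism $G_{\geqslant n} / G_{\geqslant n+1} \cong G_n$ together with a distinguished section $G_n \hookrightarrow G_{\geqslant n}$.

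First I would establish existence and uniqueness of the factors. Given $g \in G$, define a sequence inductively: set $g^{(0)} = g$, and having produced $g^{(n)} \in G_{\geqslant n}$, let $g_n \in G_n$ be the image of $g^{(n)}$ under the projection $G_{\geqslant n} \rightarrow G_{\geqslant n}/G_{\geqslant n+1} \cong G_n$, and put $g^{(n+1)} = g^{(n)} g_n^{-1}$. Since $g^{(n)}$ and its chosen representative $g_n$ have the same image in the quotient, normality of $G_{\geqslant n+1}$ forces $g^{(n+1)} \in G_{\geqslant n+1}$; moreover $g_n$ is the \emph{unique} element of $G_n$ for which this holds, which is exactly where uniqueness of the factorization originates. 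Writing $p_n = g_n g_{n-1} \cdots g_0$ for the partial products, an immediate induction gives $g\, p_n^{-1} = g^{(n+1)} \in G_{\geqslant n+1}$, so $p_n$ and $g$ have the same image in $G / G_{\geqslant n+1}$.

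For convergence I would invoke the inverse limit description $G = \varprojlim G/G_{\geqslant n}$: since the image of $p_n$ in $G/G_{\geqslant N}$ equals that of $g$ for all $n \geqslant N$, the partial products form a consistent, eventually constant family in each quotient and hence converge to $g$ in the inverse limit topology. This proves that every $g$ admits a factorization $\ldots g_2 g_1 g_0$ and, by the uniqueness of each $g_n$ above, that it is unique. The converse runs the same estimate backwards: starting from an arbitrary sequence $g_i \in G_i$, each factor lies in $G_i \subseteq G_{\geqslant i}$, so for $m \geqslant n$ one has $p_m\, p_n^{-1} = g_m \cdots g_{n+1} \in G_{\geqslant n+1}$; thus $(p_n)$ is Cauchy for the filtration and converges to a well-defined element of $G$, whose factors are the prescribed $g_i$ by the uniqueness just proved.

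The only genuine subtlety, and the step I would be most careful about, is that all bookkeeping must take place in the inverse limit rather than in a single nilpotent quotient: $G$ is only pro-nilpotent, so the infinite product has no meaning except as the limit of its partial products, and one must verify that the factor $g_n$ peeled off at stage $n$ leaves the already-determined lower factors undisturbed. This is guaranteed precisely by the containment $G_n \subseteq G_{\geqslant n}$ together with the normality of $G_{\geqslant n+1}$ in $G_{\geqslant n}$, which ensures that right multiplication by $g_m$ for $m > n$ alters $p_m$ only modulo $G_{\geqslant n+1}$. Once this compatibility is recorded, existence, uniqueness, and convergence all follow from the filtration estimates above.
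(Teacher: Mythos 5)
Your proof is correct and follows essentially the same route as the paper's: you peel off the factors $g_n$ using the semidirect product decompositions $G_{\geqslant n} = G_{\geqslant n+1} G_n$ (which is exactly the paper's observation that $G/G_{\geqslant n}$ is the product $G_{n-1}\cdots G_1 G_0$), and you get convergence from completeness of the filtration. The only cosmetic difference is that you phrase convergence via the inverse-limit topology $G = \varprojlim G/G_{\geqslant n}$, where the paper argues concretely that all but finitely many factors act trivially on each $S^{\leqslant m}\omega SJ\Phi$ --- but since the paper states the inverse-limit description just before the lemma, these amount to the same thing.
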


\begin{proof}
  The convergence of this product follows from the facts that all elements
  $g_i$ preserve any space $S^{\leqslant m} \omega SJ \Phi$, and all
  but a finite number act trivially on it. The fact that any element can be
  written uniquely as such an infinite product follows from the fact that $G /
  G_{\geqslant n}$ is essentially the product $G_{n - 1} \ldots G_2 G_1 G_0$.
\end{proof}

The natural map
\[ S \Gamma \omega SJ \Phi \rightarrow \Gamma S \omega SJ \Phi
\]
is not an isomorphism, because on the left the symmetric algebra is taken over
the reals, while on the right it is essentially taken over smooth functions on
$M$.

\begin{lemma}
  The action of renormalizations on $\Gamma S \omega SJ \Phi$ lifts to
  an action on $S \Gamma_c \omega SJ \Phi$ that preserves the
  coproduct, the coaction of $\Gamma SJ \Phi$, and the product of
  elements with disjoint support.
\end{lemma}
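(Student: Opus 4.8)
The plan is to reduce everything to the symbol of the renormalization and then rebuild the lifted map from that symbol using the coproduct and the coaction, so that the only genuinely new ingredient is the passage from a single copy of $M$ to the powers $M^n$ underlying $S \Gamma_c \omega SJ \Phi$. By Theorem \ref{uvgroupstructure} a renormalization $\rho$ is the same data as a map $R \in \tmop{Hom} (S \omega SJ \Phi, \omega)$ that vanishes on $S^0 \omega SJ \Phi$ and restricts to an isomorphism on $S^1 \omega S^0 J \Phi = \omega$; and since $R$ is a map of sheaves over $M$, its degree-$n$ component $R_n : S^n \omega SJ \Phi \rightarrow \omega$ is a local ($\mathcal{O}_M$-linear, fibrewise) operation. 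This locality is the feature that makes the construction possible, since it lets $R_n$ act on the pullback to the small diagonal of a test section living on $M^n$, the external density factors combining there into the line $\omega^{\otimes n}$ on which $R_n$ is defined.

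First I would define the symbol of the lifted map. Recall that $S^n \Gamma_c \omega SJ \Phi$ is a space of symmetric compactly supported test sections over $M^n$. I would set $\tilde R_n : S^n \Gamma_c \omega SJ \Phi \rightarrow \Gamma_c \omega$ to be pullback to the total diagonal $M \hookrightarrow M^n$ followed by the fibrewise map $R_n$, and put $\tilde R = \sum_n \tilde R_n$; for $n = 1$ this is just the pointwise action of the degree-$1$ part of $R$, an isomorphism. Because $S \Gamma_c \omega SJ \Phi$ is the symmetric algebra on $\Gamma_c \omega SJ \Phi$ it is a cocommutative coalgebra with those sections primitive, and it carries the coaction of $\Gamma SJ \Phi$. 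I would then define $\tilde \rho$ to be the unique coaction-preserving coalgebra endomorphism whose symbol is $\tilde R$, reconstructed from $\tilde R$ exactly as $\rho$ is reconstructed from $R$ in the proof of Theorem \ref{uvgroupstructure}: split an element into clusters by the iterated coproduct, apply $\tilde R$ to each cluster (the $SJ \Phi$-component of the output being supplied by the coaction), and multiply the results back together. The dual of Theorem \ref{uvgroupstructure} guarantees this is well defined and, since the degree-$1$ symbol is invertible, that $\tilde \rho$ is invertible; functoriality of the passage from symbol to endomorphism should give $\widetilde{\rho \sigma} = \tilde \rho \, \tilde \sigma$, so that we obtain a genuine action of the ultraviolet group.

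It then remains to check the three properties. Preservation of the coproduct holds by construction, since $\tilde \rho$ is manufactured as a coalgebra endomorphism; preservation of the coaction of $\Gamma SJ \Phi$ follows because $R$ is $SJ \Phi$-equivariant and pullback to diagonals is equivariant for the pointwise coaction. For multiplicativity on disjoint supports, suppose $A$ and $B$ have disjoint supports in $M$. In the cluster decomposition of $\tilde \rho (AB)$, any cluster of size at least two that mixes a factor of $A$ with a factor of $B$ contributes $\tilde R$ of a section whose pullback to the relevant diagonal is supported where an $A$-point meets a $B$-point, hence is zero; only partitions into purely $A$-clusters and purely $B$-clusters survive, and these reassemble into $\tilde \rho (A) \tilde \rho (B)$. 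Finally, that $\tilde \rho$ lifts the given action on $\Gamma S \omega SJ \Phi$ is seen by composing with the multiplication map $S \Gamma_c \omega SJ \Phi \rightarrow \Gamma_c S \omega SJ \Phi$: forcing all points onto the diagonal collapses every cluster decomposition to a single term and turns $\tilde R$ into $R$ composed with restriction, which is precisely the stalkwise action $\rho$.

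The step I expect to be the main obstacle is the very first one: making ``pullback to the total diagonal followed by $R_n$'' a well-defined map directly on the outer symmetric algebra $S \Gamma_c \omega SJ \Phi$, rather than on its non-injective, non-surjective image in $\Gamma_c S \omega SJ \Phi$. Everything hinges on $R$ being a local map of sheaves, so that its components act on the jets carried along the diagonal and the density factors combine correctly; once this is set up, the disjoint-support vanishing and the coproduct and coaction bookkeeping (including the group-action identity $\widetilde{\rho \sigma} = \tilde \rho \, \tilde \sigma$) are routine, if tedious, consequences of the constructions already used for the extended propagator $\Delta$ and in Theorem \ref{uvgroupstructure}.
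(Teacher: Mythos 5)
Your proposal is correct and takes essentially the same route as the paper's proof: you extract a symbol by restriction to the total diagonal (the paper phrases this as composing the renormalization's map $\Gamma_c S \omega SJ \Phi \rightarrow \Gamma_c \omega$ with the multiplication map $S \Gamma_c \omega SJ \Phi \rightarrow \Gamma S \omega SJ \Phi$ and integration over $M$), rebuild the automorphism via the correspondence of Theorem \ref{uvgroupstructure}, and deduce multiplicativity on disjoint supports from the locality of the symbol, i.e.\ its multilinearity over smooth functions, which kills the mixed clusters. Your additional checks (invertibility, the group law $\widetilde{\rho \sigma} = \tilde{\rho} \tilde{\sigma}$, and compatibility with the original action under the multiplication map) are correct elaborations of what the paper leaves implicit in the word ``lifts.''
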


\begin{proof}
  A renormalization is given by a linear map from $\Gamma_c S \omega SJ
  \Phi$ to $\Gamma_c \omega$, which by composition with the map $S \Gamma_c
  \omega SJ \Phi \rightarrow \Gamma S \omega SJ \Phi$ and the
  ``integration over $M$'' map $\Gamma_c \omega \rightarrow \tmmathbf{R}$
  lifts to a linear map from $S \Gamma_c \omega SJ \Phi$ to
  $\tmmathbf{R}$. This linear map has the special property that the product of
  any two elements with disjoint support vanishes, because it is multilinear
  over the ring of smooth functions. As in theorem \ref{uvgroupstructure}, the
  linear map gives an automorphism of $S \Gamma_c \omega SJ \Phi$
  preserving the coproduct and the coaction of $\Gamma SJ \Phi$. As the
  linear map vanishes on products of disjoint support, the corresponding
  renormalization preserves products of elements with disjoint support.
\end{proof}

In general, renormalizations do not preserve products of elements of $S
\Gamma_c \omega SJ \Phi$ that do not have disjoint support; the ones
that do are those in the subgroup $G_0$.

\begin{theorem}
  \label{transitive}The group of complex renormalizations acts simply
  transitively on the Feynman measures associated with a given cut local
  propagator.
\end{theorem}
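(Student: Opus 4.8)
The plan is to exploit the fact that a Feynman measure associated with a fixed cut local propagator $\Delta$ is completely determined off the diagonals by $\Delta$ together with its own values in lower degrees, so that two such measures can differ only in how they are extended across the diagonals of the spaces $M^n$. Since the lift of a renormalization to $S \Gamma_c \omega SJ \Phi$ is local --- it is multilinear over smooth functions and annihilates products of elements with disjoint support, by the lemma preceding this theorem --- a renormalization alters a Feynman measure only along these diagonals. The strategy is therefore to match two given measures one diagonal at a time, building the connecting renormalization degree by degree using the factorization $G = \ldots G_2 G_1 G_0$ of the ultraviolet group. The uniqueness of the correction at each degree will give freeness and its existence will give transitivity, so both halves of ``simply transitive'' drop out of a single induction.

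First I would make the action precise: a renormalization $\rho$, regarded as a coalgebra automorphism of $S \Gamma_c \omega SJ \Phi$ fixing the group-like element $e^{iL_F}$, acts on a Feynman measure $\omega$ by $(\rho \omega)(e^{iL_F} a) = \omega(e^{iL_F} \rho a)$, and I would check that $\rho \omega$ is again a Feynman measure associated with $\Delta$. Non-degeneracy survives because $\rho$ acts in degree one by an automorphism of $\omega = S^1 \omega S^0 J \Phi$ (Theorem~\ref{uvgroupstructure}), which merely rescales the nowhere-vanishing function $g$; the smoothness condition on the diagonal survives because $\rho$ acts by operations supported on the diagonal, whose conormal wave front directions automatically satisfy $p_1 + \cdots + p_n = 0$; and the Gaussian condition survives because $\rho$ respects the coproduct and the coaction and kills disjoint products, so the cross-term $\Delta(A'', B'')$ between two causally separated clusters is left untouched.

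The heart of the argument is an induction on the degree filtration $S^{\leqslant m}$, with the low degrees controlled by normalization and the non-degeneracy condition (the degree-one freedom being exactly the action of $G_0$ on $\omega$). Suppose $\omega_1, \omega_2$ are Feynman measures for $\Delta$ that already agree on $e^{iL_F} S^{\leqslant n}$ for some $n \geqslant 1$. Their difference on $e^{iL_F} S^{n+1}$ is a distribution on $M^{n+1}$, and I claim it is supported on the total diagonal: at any configuration that is not wholly diagonal one can, working locally, split the points into two nonempty causally ordered clusters and apply the Gaussian condition, which expresses the order-$(n+1)$ value through $\Delta$ and through values in degree $\leqslant n$ on which $\omega_1$ and $\omega_2$ coincide. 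A distribution supported on the diagonal and meeting the condition $p_1 + \cdots + p_{n+1} = 0$ is a transverse differential operator applied to a density along the diagonal, hence exactly the free data of an element of $\tmop{Hom}(S^{n+1} \omega SJ \Phi, \omega)$, which by Theorem~\ref{uvgroupstructure} is the data of a factor in $G_n$. Choosing this factor to cancel the difference and iterating produces the unique infinite product $\rho = \ldots \rho_2 \rho_1 \rho_0$ carrying $\omega_1$ to $\omega_2$, convergent by the structure lemma for $G$.

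The step I expect to be the main obstacle is the claim that the degree-$(n+1)$ difference is supported on the diagonal. The Gaussian condition applies only when no point of the first cluster is $\leqslant$ a point of the second, and since the causality relation is merely a closed preorder --- reflexive and transitive, but not assumed antisymmetric --- distinct points need not be separable in this sense globally. I would handle this by passing to local coordinates, where $\Delta$ being cut and local supplies a genuine cone order, so that any non-diagonal germ can be split into two causally ordered groups and localized with a partition of unity; the remaining work is the routine but fiddly identification of local diagonal distributions satisfying the wave front condition with the homomorphism data of Theorem~\ref{uvgroupstructure}, together with the verification that the degree-one base case is precisely absorbed by $G_0$.
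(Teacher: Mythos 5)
Your proposal follows essentially the same route as the paper's proof: you first check that renormalizations act on Feynman measures (non-degeneracy and diagonal smoothness being easy, and the Gaussian condition following from disjoint-support multiplicativity together with equivariance under the coaction of $SJ\Phi$), and then construct the unique connecting renormalization as an infinite product $\ldots g_2 g_1 g_0$ by induction on symmetric degree, using the Gaussian property to show that the degree-$(n+1)$ difference is a distribution supported on the diagonal of $M^{n+1}$ satisfying $p_1 + \cdots + p_{n+1} = 0$, hence by Theorem \ref{uvgroupstructure} the data of an element of $G_n$, with convergence of the product guaranteed by the factorization lemma for $G$. Your closing concern about splitting off-diagonal configurations into causally ordered clusters when $\leqslant$ is only a closed preorder is a legitimate refinement of a step the paper asserts in a single sentence, but it is a sharpening of, not a departure from, the paper's argument.
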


\begin{proof}
  We first show that renormalizations $\rho$ act on Feynman measures $\omega$
  associated with a given local cut propagator. We have to show that
  renormalizations preserve nondegeneracy, smoothness on the diagonal, and the
  Gaussian property. The first two of these are easy to check, because the
  value of $\rho (\omega)$ on any element is given by a finite sum of values
  of $\omega$ on other elements, so is smooth along the diagonal.
  
  To check that renormalizations preserve the Gaussian property
  \[ \text{$\omega (AB) = \sum \omega (A') \Delta (A_{}'', B'') \omega (B')$}
  \]
  we recall that renormalizations $\rho$ preserve products with disjoint
  support and also commute with the coaction of $SJ \Phi$. Since $A$
  and $B$ have disjoint supports we have $\rho (AB) = \rho (A) \rho (B)$.
  Since $\rho$ commutes with the coaction of $SJ \Phi$, the image of
  $\rho (A)$ under the coaction of $SJ \Phi$ is $\sum \rho (A') \otimes
  A''$, and similarly for $B$. Combining these facts with the Gaussian
  property for $\rho (A) \rho (B)$ shows that
  \[ \text{$\omega (\rho (AB)) = \sum \omega (\rho (A')) \Delta (A_{}'', B'')
     \omega (\rho (B'))$} \]
  or in other words the renormalization $\rho$ preserves the Gaussian
  property.
  
  To finish the proof, we have to show that for any two normalized smooth
  Feynman measures $\omega$ and $\omega'$ with the same cut local propagator,
  there is a unique complex renormalization $g$ taking $\omega$ to $\omega'$.
  We will construct $g = \ldots g_2 g_1 g_0$ as an infinite product, with the
  property that $g_{n - 1} \ldots g_0 \omega$ coincides with $\omega'$ on
  $e^{iL_F} S^{\leqslant n} \Gamma_c \omega SJ \Phi$. Suppose that
  $g_0, \ldots, g_{n - 1}$ have already been constructed. By changing $\omega$
  to \ $g_{n - 1} \ldots g_0 \omega$ we may as well assume that they are all
  1, and that $\omega$ and $\omega'$ coincide on \ $e^{iL_F} S^{\leqslant n}
  \Gamma_c \omega SJ \Phi$. We have to show that there is a unique $g_n
  \in G_n$ such that $g_n \omega$ and $\omega'$ coincide on $e^{L_F} S^{n + 1}
  \Gamma_c \omega SJ \Phi$.
  
  The difference \ $\omega - \omega'$, restricted to $e^{iL_F} S^{n + 1}
  \Gamma_c \omega SJ \Phi$, is a continuous linear function on
  $e^{iL_F} S^{n + 1} \Gamma_c \omega SJ \Phi \text{}$, which we think
  of as a distribution. Moreover, since both $\omega$ and $\omega'$ are
  determined off the diagonal by their values on elements of smaller degree by
  the Gaussian property, this distribution has support on the diagonal of
  $M^{n + 1}$. Since $\omega$ and $\omega'$ both have the property that their
  wave front sets on the diagonal are orthogonal to the diagonal, the same is
  true of their difference $\omega - \omega'$, so the distribution is given by
  a map $e^{iL_F} S^{n + 1} \Gamma_c \omega SJ \Phi \rightarrow
  \omega$. \ \ By theorem \ref{uvgroupstructure} this corresponds to some \
  renormalization $g_n \in G_n$, which is the unique element of $G_n$ such
  that $g_n \omega \tmop{and} \omega' \tmop{coincide} \tmop{on} e^{iL_F} S^{n
  + 1} \Gamma_c \omega SJ \Phi$ . 
\end{proof}

\section{Existence of Feynman measures}

We now prove theorem \ref{feynman existence} showing the existence of at least
one Feynman measure associated to any cut local propagator, by using
regularization and renormalization. Regularization means that we construct a
Feynman measure over a field of meromorphic functions, which will usually have
poles at the point we are interested in, and renormalization means that we
eliminate these poles by acting with a suitable meromorphic renormalization.

\begin{lemma}
  \label{bernsteinpolynomial} If $f_1, \ldots, f_m$ are polynomials in several
  variables, then there are non-zero (Bernstein-Sato) polynomials $b_i$ and
  differential operators $D_i$ such that
  \[ \text{ $b_i (s_1, \ldots, s_m)$} f_1 (z)^{s_1} \ldots f_m (z)^{s_m} = D_i
     (z) \left( f_i (z) f_1 (z)^{s_1} \ldots f_m (z)^{s_m}) \right. \]
\end{lemma}

\begin{proof}
  Bernstein's proof {\cite{Bernstein}} of this theorem for the case $m = 1$
  also works for any $m$ after making the obvious minor changes, such as
  replacing the field of rational functions in one variable $s_1$ by the field
  of rational functions in $m$ variables.
\end{proof}

\begin{corollary}
  \label{bernsteincontinuation}If $f_1, \ldots, f_m$ are polynomials in
  several variables then for any choice of continuous branches of the
  multivalued functions, $f_1 (z)^{s_1} \ldots f_m (z)^{s_m}$ can be
  analytically continued from the region where all $s_j$ have positive real
  part to a meromorphic distribution-valued function for all complex values of
  \ $s_1, \ldots, s_m$.
\end{corollary}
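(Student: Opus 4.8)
The plan is to deduce the corollary from the multivariable Bernstein--Sato functional equations of Lemma \ref{bernsteinpolynomial} by the standard continuation-by-shifting argument, carried out one exponent at a time. First I would fix continuous branches of each $f_j(z)^{s_j}$ on the complement of the zero locus of $f_j$, and observe that on the open set $U_0 = \{s : \Re(s_j) > 0 \text{ for all } j\}$ the product $F(s) = f_1(z)^{s_1}\cdots f_m(z)^{s_m}$ is locally bounded, since each factor $|f_j|^{\Re s_j}$ tends to $0$ along $\{f_j = 0\}$; hence it is locally integrable and defines a distribution. Pairing against a test function $\phi$ and differentiating under the integral sign shows that $s \mapsto \langle F(s), \phi\rangle$ is holomorphic on $U_0$ for every $\phi$, so $F$ is a holomorphic distribution-valued function on $U_0$.

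Next I would read Lemma \ref{bernsteinpolynomial} as an identity of distributions. Writing $e_i$ for the $i$th standard basis vector, note that $f_i(z) F(s)$ is exactly $F(s + e_i)$, so on $U_0$ the pointwise identity becomes $b_i(s)\, F(s) = D_i\, F(s + e_i)$, with $D_i$ acting distributionally. Because $D_i$ has coefficients polynomial in $z$ and $s$, it carries holomorphic distribution-valued functions to holomorphic distribution-valued functions, so the right-hand side is holomorphic wherever $F(s + e_i)$ is, in particular on $\{\Re(s_i) > -1\} \cap \{\Re(s_j) > 0,\ j \ne i\}$. Dividing by the nonzero polynomial $b_i(s)$ then extends $F$ meromorphically one unit to the left in the $i$th variable, with poles confined to $\{b_i(s) = 0\}$.

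Then I would iterate: applying the functional equations for $i = 1, \ldots, m$ repeatedly and shifting each variable up by successive integers continues $F$ to every polydomain $\{\Re(s_j) > -N_j\}$, and since the $N_j$ are arbitrary this exhausts $\mathbb{C}^m$. The accumulated pole locus lies in the locally finite union of the integer-shifted hypersurfaces $\{b_i(s + k) = 0\}$, which is independent of $\phi$. Consistency of the continuations produced by applying the relations in different orders is automatic, since each pairing $\langle F(s), \phi\rangle$ is an ordinary holomorphic function on the connected set $U_0$ and such continuation is unique; all local extensions therefore agree on overlaps and glue to a single global meromorphic distribution-valued function.

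I expect the only genuinely delicate point to be the bookkeeping that upgrades the pointwise continuation of each pairing $\langle F(s), \phi\rangle$ to a continuation of $F$ as a distribution-valued function. One must check that at every non-pole value of $s$ the extended functional $\phi \mapsto \langle F(s), \phi\rangle$ is again continuous on test functions, which is immediate because it is obtained from a distribution by applying $D_i$ and dividing by the scalar $b_i(s)$, and that the pole hypersurfaces are locally finite and uniform in $\phi$ so that ``meromorphic'' has its literal meaning. The functional-equation input of Lemma \ref{bernsteinpolynomial} does the real work here; the rest is the classical Bernstein continuation with the mild extra care of tracking several exponents at once.
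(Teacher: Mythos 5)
Your proposal is correct and takes essentially the same approach as the paper, whose proof consists of one sentence invoking the functional equation of Lemma \ref{bernsteinpolynomial} to repeatedly decrease each $s_j$ by $1$, ``just as in Bernstein's proof for the case $m=1$'' --- precisely the shift-and-divide-by-$b_i(s)$ scheme you carry out variable by variable. The details you supply (local integrability and holomorphy of the pairings on $\{\Re(s_j)>0\}$, continuity of the extended functionals, local finiteness of the integer-shifted pole hypersurfaces, and consistency of continuations via uniqueness on a connected domain) are exactly the standard fleshing-out that the paper leaves implicit.
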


\begin{proof}
  This follows by using the functional equation of lemma
  \ref{bernsteinpolynomial} to repeatedly decrease each $s_j$ by 1, just as in
  Bernstein's proof {\cite{Bernstein}} for the case $m = 1$. 
\end{proof}

\begin{theorem}
  \label{regularization}Any cut local propagator $\Delta$ has a
  regularization, in other words a Feynman measure with values in a ring of
  meromorphic functions whose cut propagator at some point is $\Delta$.
\end{theorem}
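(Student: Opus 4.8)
The plan is to regularize $\Delta$ by raising the polynomials appearing in its local expression to complex powers, producing a family $\Delta_s$ of propagators depending on parameters $s = (s_1, \ldots, s_m)$ that is regular enough to admit a naive Feynman-diagram expansion when the $\operatorname{Re}(s_j)$ are large, and then to continue the resulting measure meromorphically in $s$ by Corollary \ref{bernsteincontinuation}. The cut condition guarantees that in each coordinate chart $\Delta$ is a boundary value of an element of the algebra generated by smooth functions and powers and logarithms of polynomials; writing its singular factors as $P^{a}$, I would replace them by $P^{a+s}$, recovering the logarithms as derivatives $\partial_s P^{s} = P^{s}\log P$. For $\operatorname{Re}(s)$ sufficiently large each $P^{a+s}$ is an ordinary locally integrable function rather than a singular distribution, so $\Delta_s$ is tame enough that every product of propagators appearing below is well defined.

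First I would define, for $\operatorname{Re}(s)$ large, a candidate $\omega_s$ on $e^{iL_F}S^{n}\Gamma_c\omega SJ\Phi$ by the sum-over-pairings (Wick) formula: use the coproduct of $SJ\Phi$ to extract the simple fields at the vertices, sum over all pairings of them, assign $\Delta_s$ to each pair, and integrate the product of these propagators against the test densities. Since $\Delta_s$ is a function in this range, the product over the edges of each diagram is a product of powers of polynomials in the coordinates of $M^{n}$, hence locally integrable, and the integral converges. A partition of unity reduces the construction to charts, where Corollary \ref{bernsteincontinuation} applies to the entire edge-product of a diagram at once and continues $\omega_s$, evaluated on any fixed test element, to a meromorphic function of $s$; thus $\omega_s$ is a Feynman measure valued in a ring of meromorphic functions.

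Next I would verify the three conditions of Definition \ref{feynman measure}. The Gaussian condition is immediate from the bicharacter structure: splitting the vertices into groups $A$ and $B$ with no singular edge between them factors the sum over pairings into pairings inside $A$, pairings inside $B$, and the cross-propagators $\Delta_s(A'',B'')$, which is exactly the required identity. Non-degeneracy holds because in field-degree zero the formula reduces to integration against the density, with $g = 1$. Smoothness on the diagonal follows from the clause of the cut condition that $p + q = 0$ when the two arguments of $\Delta$ coincide: summing this over the edges meeting at a fully coincident point forces the total external momentum to vanish, and this survives the continuation. Finally, at $s = 0$ each $P^{a+s}$ returns to $P^{a}$, so $\Delta_s$ returns to $\Delta$ and the cut two-point propagator of $\omega_s$ at that point of parameter space is $\Delta$, as required; no regularization is needed for the two-point part off the diagonal, so there is no pole obstructing this specialization.

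The main obstacle is the tension between the local nature of Bernstein's theorem and the global nature of the propagator and the measure. The functional equations of Lemma \ref{bernsteinpolynomial} and the continuation of Corollary \ref{bernsteincontinuation} are available only in charts where $\Delta$ takes the polynomial form, so I must patch the local regularizations with a partition of unity and check both that the products of the patched pieces remain well defined and that the continuation of the full diagram --- a product of many regularized propagators, each itself a product of powers of polynomials --- goes through uniformly over $M^{n}$. This is precisely the situation that Corollary \ref{bernsteincontinuation} is built to handle (arbitrary products $f_1^{s_1}\cdots f_m^{s_m}$), so the real work is the bookkeeping that casts every product occurring in every diagram into that form, together with the check that the wave front set conditions survive the continuation so that the specialization at $s = 0$ is genuinely a cut propagator.
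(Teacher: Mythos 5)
Your overall strategy is the paper's own (the Etingof-style argument): use the clause of the cut condition saying $\Delta$ is locally a boundary value of smooth functions times powers and logarithms of polynomials, raise the polynomials to complex powers, recover the logarithms as $\frac{d}{dt}p^t$ at $t=0$, localize with a partition of unity (available since spacetime is metrizable), observe that for large $\operatorname{Re}(s)$ the Wick expansion converges because the regularized propagators are regular enough, and continue each whole edge-product meromorphically by Corollary \ref{bernsteincontinuation}. But there is one genuine gap: you assign the regularized \emph{cut} propagator $\Delta_s$ to every edge of every pairing. The cut propagator is not symmetric, so ``assign $\Delta_s$ to each pair'' does not define a linear map on the symmetric algebra $e^{iL_F}S\Gamma_c\omega SJ\Phi$: each edge needs an orientation, and different orderings of the same monomial would give different answers. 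The edges must carry a \emph{symmetric} (time-ordered, Feynman) propagator; the paper's proof says exactly this (``defining it as a sum of products of Feynman propagators''), and the regularized Feynman propagator is obtained by the other choice of branch or boundary value in the local expression, as in the paper's example where the branch depends on whether one wants the cut or the Feynman propagator. Off the diagonal this symmetric extension is forced by locality ($\Delta_F(x,y)=\Delta(x,y)$ for $x\nleqslant y$ and $=\Delta(y,x)$ for $y\nleqslant x$, consistent on spacelike pairs precisely because $\Delta$ is local); note that your proposal never invokes the hypothesis that $\Delta$ is local, which is a symptom of the missing step.

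The same conflation affects two of your verifications. Your Gaussian-condition check is not a purely combinatorial consequence of the bicharacter structure: it holds because the edges \emph{inside} each group carry the Feynman propagator while the cross edges, under the support-ordering hypothesis, may be evaluated with the cut propagator, since the time-ordered propagator agrees with $\Delta_s$ at points $(x,y)$ with $x\nleqslant y$. Likewise the specialization claim should read: the degree-two part of $\omega_s$ is the regularized Feynman propagator, whose associated cut propagator (its values at time-ordered points) is $\Delta_s$, hence $\Delta$ at $s=0$. With the edge decoration corrected, the remainder of your argument, including the ``bookkeeping'' you flag at the end, matches the paper's proof; the paper additionally restricts the several continuation parameters to a line at the end so as to land in meromorphic functions of a single variable, a cosmetic step you can append.
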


\begin{proof}
  The following argument is inspired by the one in Etingof {\cite{Etingof}}.
  By using a locally finite smooth partition of unity, which exists since we
  assume that spacetime is metrizable, we can reduce to showing that a
  regularization exists locally. If a local propagator is smooth, it is easy
  to construct a Feynman measure for it, just by defining it as a sum of
  products of Feynman propagators. Now suppose that we have a meromorphic
  family of local propagators $\Delta^{_{}}_d$ depending on real numbers
  $d_i$, given in local coordinates by a finite sum of boundary values of
  terms of the form
  \[ s (x, y) p_1 (x, y)^{d_1} \ldots p_k (x, y)^{d_k} \log (p_{k + 1} (x, y))
     \ldots \]
  where $s$ is smooth in $x$ and $y$, and the $p_i$ are polynomials, and where
  we choose some branch of the powers and logarithms in each region where they
  are non-zero. In this case the Feynman measure can also be defined as a
  meromorphic function of $d$ for all real $d$. To prove this, we can forget
  about the smooth function $s$ as it is harmless, and we can eliminate the
  logarithmic terms by writing $\log (p)$ as $\frac{d}{dt} p^t \tmop{at} t =
  0$. For any fixed number of fields with derivatives of fixed order, the
  corresponding distribution is defined when all variables $d_i$ have
  sufficiently large real part, because the product of the propagators is
  smooth enough to be defined in this case. But this distribution is given \
  in local coordinates by the product the $d_i$'th powers of \ polynomials of
  $x$ and $y$. By Bernstein's corollary \ref{bernsteincontinuation} these
  products can be continued as a meromorphic distribution-valued function of
  the $d_i$ to all complex $d_i$.
  
  This gives a Feynman measure with values in the field of meromorphic
  functions in several variables, and by restricting functions to the diagonal
  we get a Feynman measure whose value are meromorphic functions in one
  variable.
\end{proof}

\begin{example}
  Dimensional regularization. Over Minkowski space of dimension $d$, there is
  a variation of the construction of a meromorphic Feynman measure, which is
  very similar to dimensional regularization. In dimensional regularization,
  one formally varies the dimension of spacetime, to get Feynman diagrams that
  are meromorphic functions of the dimension of spacetime. One way to make
  sense out of this is to keep the dimension of spacetime fixed, but vary the
  propagator of the free field theory, by considering it to be a meromorphic
  function of a complex number $d$. The propagator for a Hermitian scalar
  field, considered as a distribution of $z$ in Minkowski space, can be
  written as a linear combination of functions of the form
  \[ K_{d / 2 - 1} (c \sqrt{(z, z)}) / \sqrt{(z, z)^{}}^{d / 2 - 1} \]
  where $K_{\nu} (z)$ is a multi-valued modified Bessel function of the third
  kind, and where we take a suitable choice of branch (depending on whether we
  are considering a cut or a Feynman propagator). A similar argument using
  Bernstein's theorem shows that this gives a Feynman measure that is analytic
  in $d$ for $d$ with large real part and that can be analytically continued
  as a meromorphic function to all complex $d$. This gives an explicit example
  of a meromorphic Feynman measures for the usual propagators in Minkowski
  space.
\end{example}

\begin{theorem}
  \label{renormalizemeasure}Any meromorphic Feynman measure can be made
  holomorphic by acting on it with a meromorphic renormalization.
\end{theorem}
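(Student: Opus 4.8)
The plan is to follow the strategy of Theorem \ref{transitive} almost verbatim, removing the poles of the measure one degree at a time (``in order of complexity''). I would write the sought renormalization as a convergent infinite product $g = \ldots g_2 g_1 g_0$ with $g_n \in G_n$, now with coefficients in the field of meromorphic functions, arranged so that $g_{n-1}\cdots g_0\,\omega$ is holomorphic on $e^{iL_F} S^{\leqslant n}\Gamma_c\omega SJ\Phi$. Because renormalizations preserve the Gaussian property with the same cut propagator, they never alter the measure off the diagonal; the whole construction therefore only adjusts the diagonal behaviour, which is exactly where the poles of a regularized measure sit.

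For the inductive step, suppose $g_0,\ldots,g_{n-1}$ have been chosen so that, after renaming $g_{n-1}\cdots g_0\,\omega$ to $\omega$, the measure is already holomorphic on $e^{iL_F}S^{\leqslant n}\Gamma_c\omega SJ\Phi$. Consider the principal (polar) part $P$ of $\omega$ restricted to $e^{iL_F}S^{n+1}\Gamma_c\omega SJ\Phi$, viewed as a meromorphic distribution on $M^{n+1}$. The Gaussian condition of Definition \ref{feynman measure} expresses $\omega$ on this space, away from the full diagonal, as a sum of products $\omega(A')\,\Delta(A'',B'')\,\omega(B')$ in which both $A'$ and $B'$ have degree at most $n$; since these lower-degree values and the propagator $\Delta$ are holomorphic at the point of interest, the off-diagonal part of $\omega$ is holomorphic there, and hence $P$ is supported on the diagonal of $M^{n+1}$. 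The smoothness-on-the-diagonal condition forces the wave front set of $\omega$, and therefore of $P$, to be orthogonal to the diagonal, so $P$ is given by a meromorphic linear map $e^{iL_F}S^{n+1}\Gamma_c\omega SJ\Phi \to \omega$. By the meromorphic version of Theorem \ref{uvgroupstructure} this map is the data of a renormalization $g_n \in G_n$; since $g_n$ fixes $S^{\leqslant n}\omega SJ\Phi$ and acts on the degree-$(n+1)$ part by adding exactly this principal part up to sign, a suitable choice of $g_n$ cancels $P$ and leaves $\omega$ holomorphic through degree $n+1$, completing the step.

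Finally, the infinite product $\ldots g_2 g_1 g_0$ converges exactly as in the lemma exhibiting $G$ as $\ldots G_2 G_1 G_0$: on any fixed $S^{\leqslant m}\omega SJ\Phi$ all but finitely many factors act trivially. The resulting meromorphic renormalization $g$ then makes $g\omega$ holomorphic on every $e^{iL_F}S^{\leqslant m}\Gamma_c\omega SJ\Phi$, hence holomorphic. I expect the main obstacle to be the claim that the polar part $P$ is supported on the diagonal: this is where one must combine the Gaussian property with the holomorphy of the propagator and of all lower-degree values, and care is needed because the propagator itself is only holomorphic near the point where the measure is to be made regular, so the induction must be organized around that point. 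The verification that the smoothness-on-the-diagonal condition survives passage to the polar part, and that Theorem \ref{uvgroupstructure} goes through unchanged over a field of meromorphic functions, should be routine.
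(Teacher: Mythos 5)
Your proposal is correct and follows the paper's own proof essentially verbatim: the same infinite product $g = \ldots g_2 g_1 g_0$ with $g_n \in G_n$ killing the order-$(n+1)$ poles, the same key observation that the Gaussian property forces the lowest-order polar part to be supported on the diagonal (with smoothness on the diagonal then yielding a map to $\omega$ and hence, via Theorem \ref{uvgroupstructure}, an element of $G_n$), and the same convergence argument for the infinite product. Your closing caveat about the propagator family being holomorphic only near the point of interest is a reasonable elaboration of a point the paper leaves implicit, not a divergence in method.
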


\begin{proof}
  This is essentially the result that a bare quantum field theory can be made
  finite by an infinite renormalization. \ Suppose that $\omega$ is a
  meromorphic Feynman measure. Using the same idea as in theorem
  \ref{transitive} we will construct a meromorphic renormalization $g =
  \text{$\ldots g_2 g_1 g_0$}$ as an infinite product, but this time we choose
  $g_n \in G_n$ \ to kill the singularities of order $n + 1$. \ The key point
  is to prove that these lowest order singularities are ``local'', meaning
  that they have support on the diagonal. (In the special case of
  translation-invariant theories on Minkowski spacetime \ this becomes the
  usual condition that they are ``polynomials in momentum'', or more precisely
  that their Fourier transforms are essentially polynomials in momentum on the
  subspace with total momentum zero). \ The locality follows from the Gaussian
  property of $\omega$, which determines $\omega$ at each order in terms of
  smaller orders except on the diagonal. In particular if $\omega$ is
  nonsingular at all orders at most $n$, then the singular parts of the order
  $n + 1$ terms all have support on the diagonal. Since the difference is
  smooth along the diagonal, we can find some $g_n \in G_n$ that kills off the
  order $n + 1$ singularities, \ as in \ theorem \ref{transitive}. Since
  renormalizations preserve the Gaussian property we can keep on repeating
  this indefinitely, killing off the singularities in order of their order.
\end{proof}

The famous problem of ``overlapping divergences'' is that the counter-terms
for individual Feynman diagrams used for renormalization sometimes contain
non-polynomial (logarithmic) terms in the momentum, which bring
renormalization to a halt unless they miraculously cancel when summed over all
Feynman diagrams. This problem is avoided in the proof above because by using
the ultraviolet group we only need to handle the divergences of lowest order
at each step, where it is easy to see that the \ logarithmic terms cancel.

\begin{theorem}
  \label{feynman existence}For any cut local propagator there is a Feynman
  measure associated to it.
\end{theorem}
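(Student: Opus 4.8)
The plan is to derive this statement by assembling the two preceding theorems and then specializing the regularization parameter. First I would invoke Theorem \ref{regularization} to produce, from the given cut local propagator $\Delta$, a regularization: a Feynman measure $\omega$ taking values in a ring of meromorphic functions of a complex parameter $d$, whose cut propagator at some point $d_0$ equals $\Delta$. In general $\omega$ will have poles at $d_0$, so it cannot simply be evaluated there, and this is precisely the difficulty that the renormalization step is designed to resolve.

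Next I would apply Theorem \ref{renormalizemeasure} to find a meromorphic renormalization $g$ such that $g \omega$ is holomorphic at $d_0$. The point of passing to the orbit under the ultraviolet group, rather than attempting to remove the poles of $\omega$ by hand, is that Theorem \ref{renormalizemeasure} guarantees the poles can be killed while keeping $g\omega$ a Feynman measure: renormalizations preserve non-degeneracy, smoothness on the diagonal, and the Gaussian condition, as was checked in the proof of Theorem \ref{transitive}.

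The one thing I would then need to verify is that the cut propagator of the specialized measure is still $\Delta$, and this is where a little care is warranted. Renormalizations can alter the Feynman propagator of a measure, but only on the diagonal, since a renormalization is built from maps supported on the diagonal and acts trivially on the off-diagonal data that determine the cut propagator. Hence the cut propagator of $g\omega$ agrees with that of $\omega$ at every value of $d$ where both are defined, and in particular equals $\Delta$ at $d_0$. Evaluating the now-holomorphic measure $g\omega$ at $d = d_0$ yields a genuine complex-valued continuous linear map $e^{iL_F} S\Gamma_c \omega SJ\Phi \rightarrow \tmmathbf{C}$; since specialization of the parameter preserves each of the defining conditions of Definition \ref{feynman measure}, this is a Feynman measure, and by the previous remark it is associated to $\Delta$.

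The main obstacle is not in this assembly, which is essentially formal bookkeeping, but has already been absorbed into Theorem \ref{renormalizemeasure}: namely the proof that the lowest-order singularities at each stage are local, i.e.\ supported on the diagonal, which is what allows them to be removed by an element of $G_n$. Granting that, the only genuinely new point here is the observation that renormalization leaves the off-diagonal part, and hence the cut propagator, untouched, so that the specialization at $d_0$ lands on a measure associated to the prescribed $\Delta$ rather than to some renormalized variant of it.
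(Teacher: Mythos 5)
Your proposal is correct and takes essentially the same route as the paper, whose proof consists precisely of combining theorem \ref{regularization} (regularization produces a meromorphic Feynman measure) with theorem \ref{renormalizemeasure} (a meromorphic renormalization removes the poles). The extra verification you supply---that renormalization changes the measure only on the diagonal and hence leaves the cut propagator fixed---is a point the paper records separately (in its example on normalizing Feynman propagators), so your filling it in is sound bookkeeping rather than a deviation.
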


\begin{proof}
  This follows from theorem \ref{regularization}, which uses regularization to
  show that there is a meromorphic Feynman measure, and theorem
  \ref{renormalizemeasure} which uses renormalization to show that the poles
  of this can be eliminated.
\end{proof}

\section{Subgroups of the ultraviolet group}

There are many additional desirable properties that one can impose on Feynman
measures, such as being Hermitian, or Lorentz invariant, or normal ordered,
and there is often a subgroup of the ultraviolet group that acts transitively
on the measures with the given property. We give several examples of this.

\begin{example}
  A Feynman measure can be normalized so that on $S^1 \Gamma_c \omega S^{^0}
  J \Phi = \Gamma_c \omega$ its value is given by integrating over spacetime
  (in other words $g = 1$ in definition \ref{feynman measure}), by acting on
  it by a unique element of the ultraviolet group consisting of
  renormalizations in $G_0$ that are trivial on $\omega S^{> 0} J \Phi$. This
  group can be identified with the group of nowhere-vanishing smooth complex
  functions on spacetime. The complementary normal subgroup of the ultraviolet
  group consists of the renormalizations that fix all elements of \ $\omega
  S^0 J \Phi = \omega$, and this acts simply transitively on the normalized
  Feynman measures. In practice almost any natural Feynman measure one
  constructs is normalized.
\end{example}

\begin{example}
  Normal ordering. In terms of Feynman diagrams, ``normal ordering'' means
  roughly that Feynman diagrams with an edge from a vertex to itself are
  discarded. We say that a Feynman measure is normally ordered if it vanishes
  on $\Gamma_c \omega S^{> 0} J \Phi$. Informally, \ $\omega S^{> 0} J \Phi$
  corresponds to Feynman diagrams with just one point and edges from this
  point to itself. We will say that a renormalization is normally ordered if
  it fixes all elements of $\omega S^{> 0} J \Phi$. The subgroup of normally
  ordered renormalizations acts transitively on the normally ordered Feynman
  measures. The group of all renormalizations is the semidirect product of its
  normal subgroup $G_{> 0}$ of normally-ordered renormalizations with the
  subgroup $G_0$ preserving all products. For any renormalization, there is a
  unique element of $G_0$ that takes it to a normally ordered renormalization.
  The Feynman measures constructed by regularization (in particular those
  constructed by dimensional regularization) are usually normally ordered if
  the spacetime has positive dimension, but are usually not for 0-dimensional
  spacetimes. This is because the propagators tend to contain a factor such as
  $(x - y)^{- 2 d}$ which vanishes for large $- d$ when $x = y$, and so
  vanishes on Feynman diagrams with just one point for all $d$ by analytic
  continuation. So for most purposes we can restrict to normally-ordered
  Feynman measures and normally-ordered renormalizations, at least for
  spacetimes of positive dimension. 
\end{example}

\begin{example}
  Normalization of Feynman propagators. In general a renormalization fixes the
  cut propagator but \ can change the Feynman propagator, by adding a
  distribution with support on the diagonal. However there is often a
  canonical choice of Feynman propagator: the one with a singularity on the
  diagonal of smallest possible order, which will often also be a Green
  function for some differential operator. We can add the condition that the
  Feynman propagator of a Feynman measure should be this canonical choice; the
  subgroup of renormalizations fixing the Feynman propagator, consisting of
  renormalizations fixing $S^2 \omega J \Phi$, acts simply transitively on
  these Feynman measures.
\end{example}

\begin{example}
  \label{simple operator}Simple operators. More generally, there is a subgroup
  consisting of renormalizations $\rho$ such that $\rho (aB) = \rho (a) \rho
  (B)$ whenever $a$ is simple (involving only one field), but where $B$ is
  arbitrary. This stronger condition is useful because it says (roughly) that
  simple operators containing only one field do not get renormalized; see the
  discussion in section \ref{interactingqft}. \ We can find a set of Feynman
  measures acted on simply transitively by this group by adding the condition
  that
  \[ \omega (aB) = \sum \Delta_F (aB_1) \omega (B_2) \]
  whenever $a$ is simple and $\sum B_1 \otimes B_2$ is the coproduct of $B$.
  This relation holds whenever $a$ and $B$ have disjoint supports by
  definition of a Feynman measure, so the extra condition says that it also
  holds even when they have overlapping supports. The key point is that the
  product of distributions above is always defined because any non-zero
  element of the wave front set of $\Delta_F$ is of the form $(p, - p)$. This
  would not necessarily be true if $a$ were not simple because we would get
  products of more than 1 Feynman propagator whose singularities might
  interfere with each other. In terms of Feynman diagrams, this says that
  vertices with just one edge are harmless: more precisely, with this
  normalization, adding a vertex with just one edge to a Feynman diagram has
  the effect of multiplying its value by the Feynman propagator of the edge.
  As this condition extends the Gaussian property to more Feynman diagrams, it
  can also be thought of as a strengthening of the translation invariance
  property of the Feynman measure.
\end{example}

\begin{example}
  \label{dyson}Dyson condition. Classically, Lagrangians were called
  renormalizable if all their coupling constants have non-negative mass
  dimension. The filtration on Lagrangian densities by mass dimension induces
  a similar filtration on Feynman measures and renormalizations. The Feynman
  measures of mass dimension $\leqslant 0$ are acted on simply transitively by
  the renormalizations of mass dimension $\leqslant 0$. This is useful,
  because the renormalizations of mass dimension at most 0 act on the spaces
  of Lagrangian densities of mass dimension at most 0, and these often form
  finite dimensional spaces, at least if some other symmetry conditions such
  as Lorentz invariance are added. For example, in dimension 4 the density has
  dimension $- 4$, so the (Lorentz-invariant) terms of the Lagrangian density
  of mass dimension at most 0 are given by (Lorentz invariant) terms of the
  Lagrangian of mass dimension at most 4, such as $\varphi^4$, $\varphi^2$,
  $\partial \varphi \partial \varphi$, and so on: the usual Lorentz-invariant
  even terms whose coupling constants have mass dimension at least 0. For
  example, we get a three-dimensional space of theories of the form $\lambda
  \varphi^4 + m \varphi^2 + z \partial \varphi \partial \varphi$ in this way,
  giving the usual $\varphi^4$ theory in 4 dimensions. 
\end{example}

\begin{example}
  Boundary terms. The Feynman measures constructed in section 3 have the
  property that they vanish on ``boundary terms''. This means that we quotient
  the space of local Lagrangians $\Gamma_c \omega SJ \Phi$ by its image
  under the action of smooth vector fields such as $\partial / \partial x_i$,
  or in other words we replace a spaces of $n$-forms by the corresponding de
  Rham cohomology group. These measures are acted on simply transitively by
  renormalizations corresponding to maps that vanish on boundary terms. This
  is useful in gauge theory, because some symmetries such as the BRST symmetry
  are only symmetries up to boundary terms. 
\end{example}

\begin{example}
  Symmetry invariance. Given a group (or Lie algebra) $G$ such as a gauge
  group acting on the sheaf $\Phi$ of classical fields and preserving a given
  cut propagator, the subgroup of $G$-invariant renormalizations acts simply
  transitively on the $G$-invariant Feynman measures with given cut
  propagator. In general there need not exist any $G$-invariant Feynman
  measure associated with a given cut local propagator, though if there is
  then $G$-invariant Lagrangians lead to $G$-invariant quantum field theories.
  The obstructions to finding a $G$-invariant measure are cohomology classes
  called anomalies, and are discussed further in section \ref{anomalies}.
\end{example}

\begin{example}
  Lorentz invariance. An important case of invariance under symmetry is that
  of Poincare invariance for flat Minkowski space. In this case the spacetime
  $M$ is Minkowski space, the Lie algebra $G$ is that of the Poincare group of
  spacetime translations and Lorentz rotations, and the cut propagator is one
  of the standard ones for free field theories of fields of finite spin. Then
  dimensional regularization is invariant under $G$, so we get a Feynman
  measure invariant under the Poincare group, and in particular there are no
  anomalies for the Poincare algebra. The elements of the ultraviolet group
  that are Poincare invariant act simply transitively on the Feynman measures
  for this propagator that are Poincare invariant. If we pick any such
  measure, then we get a map from invariant Lagrangians to invariant quantum
  field theories. 
\end{example}

\begin{example}
  Hermitian conditions. The group of complex renormalizations has a real form,
  consisting of the subgroup of (real) renormalizations. This acts simply
  transitively on the Hermitian Feynman measures associated with a given cut
  local propagator. The Hermitian Feynman measures (or propagators) are not
  the real-valued ones, but satisfy a more complicated Hermitian condition
  described in definition \ref{hermitianmeasure}.
\end{example}

\section{\label{freeqft}The free quantum field theory}

We extend the Feynman measure $\omega : e^{iL_F} S \Gamma_c \omega SJ
\Phi \rightarrow \tmmathbf{C}$, which is something like a measure on classical
fields, to $\omega : Te^{iL_F} S \Gamma_c \omega SJ \Phi \rightarrow
\tmmathbf{C}$. This extension, restricted to the even degree subalgebra $T_0
e^{iL_F} S \Gamma_c \omega SJ \Phi$, is the free quantum field theory.
We check that it satisfies analogues of the Wightman axioms.

Formulas involving coproducts can be confusing to write down and manipulate.
They are much simpler for the ``group-like'' elements $g$ satisfying $\Delta
(g) = g \otimes g$, $\eta (g) = 1$, which form a group in any cocommutative
Hopf algebra. One problem is that most of the Hopf algebras we use do not have
enough group-like elements over fields: in fact for symmetric algebras the
only group-like element is the identity. However they have plenty of
group-like elements if we add some nilpotent elements to the base field, such
as $\exp (\lambda a)$ for any primitive $a$ and nilpotent $\lambda$ (in
characteristic 0). We will adopt the convention that when we talk about
group-like elements, we are tacitly allowing extensions of the base ring by
nilpotent elements.

Recall that $Te^{iL_F} S \Gamma_c \omega SJ \Phi$ is the tensor algebra of
$e^{iL_F} S \Gamma_c \omega SJ \Phi$, with the product denoted by $\otimes$ to
avoid confusing it with the product of $S \Gamma_c SJ \Phi$. We denote
the identity of $S \Gamma_c SJ \Phi$ by $1$, and the identity of
$\tmop{TS} \Gamma_c \omega SJ \Phi$ by $1_T$. The involution $\ast$ is
defined by $(A_1 \otimes \ldots \otimes A_n)^{\ast} = A_n^{\ast} \otimes
\ldots \otimes A_1^{\ast}$, and $\ast$ is $- 1$ on $\Gamma_c \omega SJ
\Phi$.

\begin{theorem}
  \label{extendomega}If $\omega : e^{iL_F} S \Gamma_c \omega SJ \Phi
  \rightarrow C$ is a Feynman measure then there is a unique extension of \
  $\omega$ to $Te^{iL_F} S \Gamma_c \omega SJ \Phi$ such that
  \begin{itemizedot}
    \item Gaussian condition: if $A, B_1, \ldots, B_m$ are group-like then
    \begin{eqnarray*}
      &  & e^{- iL_F} \omega (A_{} \otimes B_m \otimes \ldots \otimes B_1)\\
      & = & \sum e^{- iL_F} \omega (A_{} \otimes 1 \otimes \ldots \otimes 1)
      \Delta (A, B_m \ldots B_1) e^{- iL_F} \omega (B_m \otimes \ldots \otimes
      B_1)
    \end{eqnarray*}
    \ \ Both sides are considered as densities, as in definition \ref{feynman
    measure}.
    
    \item $e^{- iL_F} \omega (A \otimes A \otimes 1 \otimes \ldots \otimes 1)
    = 1$ \ for $A$ group-like (Cutkosky condition; see{\cite[section
    6]{hooft}}.)
  \end{itemizedot}
\end{theorem}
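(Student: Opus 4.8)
The plan is to reduce everything to group-like arguments and then read the two displayed conditions as a recursion on the number of tensor factors. A linear functional on a cocommutative coalgebra is determined by its restriction to group-like elements once we allow the nilpotent extensions of the base ring adopted in the paper's conventions: by polarization, differentiating families $\exp(\lambda a)$ in the nilpotents $\lambda$ recovers the values on all the products of primitives that span $S\Gamma_c\omega SJ\Phi$. So it suffices to define $\omega(A_n\otimes\cdots\otimes A_1)$ for group-like $A_i$ and to verify the two conditions there; $\mathbf{R}$-linearity then propagates both the construction and the two identities to all of $Te^{iL_F}S\Gamma_c\omega SJ\Phi$. The base case of a single factor is the given Feynman measure (Definition \ref{feynman measure}).

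For uniqueness I would induct on the number of tensor factors. The Gaussian identity expresses $\omega(A\otimes B_m\otimes\cdots\otimes B_1)$ through $\Delta(A,B_m\cdots B_1)$, the value $\omega(B_m\otimes\cdots\otimes B_1)$ on one fewer factor, and the ``one-sided'' value $\omega(A\otimes 1\otimes\cdots\otimes 1)$. The subtlety is that the Gaussian identity alone does not pin these one-sided values down: setting every $B_i=1$ gives only a tautology, since $\Delta(A,1)=\varepsilon(A)=1$ for group-like $A$. This is exactly the gap the Cutkosky condition fills. Viewing $\omega(\cdot\otimes 1\otimes\cdots\otimes 1)$ and $\omega(1\otimes\cdots\otimes 1\otimes\cdot)$ as two functionals on the single-slot coalgebra, the Gaussian identity applied to $A\otimes A$ computes their convolution in terms of $\Delta(A,A)$, and the Cutkosky condition $e^{-iL_F}\omega(A\otimes A\otimes 1\cdots 1)=1$ forces that convolution to be trivial. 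This is the antipode/unitarity relation $S\ast\mathrm{id}=\varepsilon$ in disguise, in the convolution algebra of measures, and it determines the one-sided values from the single-slot Feynman measure. With those in hand the Gaussian recursion determines $\omega$ on all group-like tensors, hence everywhere.

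For existence I would turn the recursion into a definition, equivalently writing $\omega$ as the Wick/Feynman-diagram sum over the fields occurring in the various slots: contractions between two fields in the same slot are evaluated by the given single-slot Feynman measure, contractions between fields in distinct slots are evaluated by the cut propagator $\Delta$ in the causal order dictated by the slot ordering, and disconnected pieces factor. One then checks that this assignment solves the recursion and satisfies both displayed identities. The Cutkosky identity becomes the statement that the sum over all ways of cutting $A\otimes A$ collapses to $e^{iL_F}$; I expect this to follow from the bicharacter (hence exponential-on-group-likes) property of $\Delta$ together with $\ast$ acting as $-1$ on $\Gamma_c\omega SJ\Phi$, which is the combinatorial content of the cutting rules.

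The main obstacle is well-definedness, i.e.\ that the distribution products actually exist, and this is precisely where the cut hypothesis on $\Delta$ is essential and where the extension improves on the single-slot measure. Between distinct tensor slots one only ever multiplies cut propagators, whose wave front sets lie in a fixed proper convex cone by the positive-energy condition built into the definition of a cut propagator, so H\"ormander's criterion for multiplying distributions is met and the products exist \emph{unconditionally} --- which is exactly why the Gaussian identity now holds with no causal-separation hypothesis, unlike in the single-slot case. Products of the within-slot Feynman propagators may still fail on the diagonal, but those singularities are already controlled by the input Feynman measure and never get multiplied across the cut. I anticipate the remaining delicate points to be the coherence of the recursion with the full bicharacter associativity of $\Delta$, and the degenerate instance of the Cutkosky condition where the two copies of $A$ have coincident support, which must be handled through the group-like exponential structure rather than through any causal separation.
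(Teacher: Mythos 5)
Your proposal takes essentially the same route as the paper's (very terse) proof: uniqueness and existence come from reading the Cutkosky condition as a recursion defining the one-sided values $\omega(A\otimes 1\otimes\cdots\otimes 1)$ in terms of those with one fewer tensor factor, after which the Gaussian condition determines $\omega$ on all elements, with existence obtained by running the same recursion (equivalently the mixed Feynman/cut diagram sum, which the paper records immediately after its proof) as a definition. Your well-definedness discussion also matches the paper's in substance, though the paper states the mechanism slightly more sharply as an induction on complexity --- every distribution appearing has a wave front set with no purely positive or purely negative elements, a property that is simultaneously used and propagated --- which is the precise reason the within-slot diagonal singularities of the $\omega$-factors (which \emph{are} multiplied against the cross-slot cut propagators as distributions on the product space) cannot produce cancelling covectors, rather than the cut propagators' cone condition acting alone.
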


\begin{proof}
  We first check that all the products of distributions are well defined by
  examining their wave front sets. All the distributions appearing have the
  property that their wave front sets have no positive or negative elements.
  This follows by induction on the complexity of an element: if all smaller
  elements have this property, it implies that the products defining it are
  well defined, and also implies that it has the same property.
  
  Existence and uniqueness of $\omega$ follows because the Cutkosky condition
  defines it on elements of the form $A \otimes 1 \otimes 1 \otimes \ldots
  \otimes 1$ in terms of those of the form $A \otimes 1 \otimes \ldots \otimes
  1$, and the Gaussian condition then determines it on all elements. \ 
\end{proof}

We can also define $\omega$ directly as follows. When the propagator is
sufficiently regular then the Gaussian condition means that we can write
$\omega$ on $e^{iL_F} S \Gamma_c \omega SJ \Phi$ as a sum over all ways
of joining up the \ fields of an element of $e^{iL_F} S \Gamma_c \omega
SJ \Phi$ in pairs, where we take the propagator of each pair and
multiplying these together. This is of course essentially the usual sum over
Feynman diagrams. A minor difference is that we do not distinguish between
``internal'' vertices associated with a Lagrangian and integrated over all
spacetime, and ``extenal'' vertices associated with a field and integrated
over a compact set: all vertices are associated with a composite operator that
may be a Lagrangian or a simple field or a more general composite operator,
and all vertices are integrated over compact sets as all coefficients are
assumed to have compact support.

Similarly we can \ define the extension of $\omega$ to \ $Te^{iL_F} S
\Gamma_c \omega SJ \Phi$ by writing the distributions defining $\omega$
\ as a sum over more complicated Feynman diagrams whose vertices are in
addition labeled by non-negative integers, such that
\begin{itemize}
  \item The propagators from $A_i$ to $A_i$ are Feynman propagators.
\end{itemize}
\begin{itemize}
  \item The propagators from $A_i$ to $A_j$ for $i < j$ are cut propagators
  $\Delta$, with positive wave front sets on $i$ and negative wave front sets
  on $j$.
  
  \item The diagram is multiplied by a factor of $(- 1)^{\deg (A_2 A_4 A_6
  \ldots)}$; in other words we apply $\ast$ to $A_2$, $A_4, \ldots$.
\end{itemize}
In general if \ the propagator is not sufficiently regular (so that products
of propagators might not be defined when some points coincide) we can
construct $\omega$ by regularization and renormalization as in section 3,
which preserves the conditions defining $\omega$.

Now we show that $\omega$ satisfies the locality property of quantum field
theories (operators with spacelike-separated supports commute) by showing that
it vanishes on the following locality ideal.

\begin{definition}
  \label{localityideal}$T_0 S \Gamma_c \omega SJ \Phi$ is the
  subalgebra of even degree elements of $T_{} S \Gamma_c \omega SJ
  \Phi$. The locality ideal is the 2-sided ideal of $T_0 S \Gamma_c \omega
  SJ \Phi$ spanned by the coefficients of elements of the form
  \[ \ldots \otimes Y_1 \otimes ABD \otimes DBC \otimes X_n \otimes \ldots
     \otimes X_1 - \ldots \otimes Y_1 \otimes AD \otimes DC \otimes X_n
     \otimes \ldots \otimes X_1 \]
  (for $A, C \in S \Gamma_c \omega SJ \Phi$ \ and $B, D \in S \Gamma_c
  \omega SJ \Phi [[\tmmathbf{\lambda}]]$ with $B, D$ group-like) if $n$
  is even and there are no points in the support of $B$ that are
  $\leqslant$any points in the support of $A$ or $C$, or if $n$ is odd and \
  there are no points in the support of $B$ that are $\geqslant$any points in
  the support of $A$ or $C$.
  
  The algebra \ $T_0 e^{iL_F} S \Gamma_c \omega SJ \Phi$ and its
  locality ideal are defined in the same way. 
\end{definition}

\begin{remark}
  The map $\omega$ on $T_0 e^{iL_F} S \Gamma_c \omega SJ \Phi$ depends
  on the choice of Feynman measure. We can define a canonical map independent
  of the choice of Feynman measure by taking the underlying $*$-algebra to have
  elements represented by pairs $(\omega, A)$ for a Gaussian measure $\omega$
  and $A \in \text{$T_0 e^{iL_F} S \Gamma_c \omega SJ \Phi$}$, where we
  identify $(\omega, A)$ with $(\rho \omega, \rho A)$ for any renormalization
  $\rho$. The canonical state, also denoted by $\omega$, then takes an element
  represented by $(\omega, A)$ to $\omega (A)$.
\end{remark}

\begin{theorem}
  \label{localideal}$\omega$ vanishes on the locality ideal.
\end{theorem}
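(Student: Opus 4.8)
The plan is to reduce everything to group-like arguments and then to an explicit Wick-type factorization of $\omega$, in which the difference of the two terms collapses to a product of contractions of the inserted factor $B$ with its surroundings. First I would reduce to the case that $A,B,C,D$ are all group-like. Both sides of the generating relation are polynomial in $A$ and $C$ and are determined by their values on group-like elements through the coproduct (extending the base ring by nilpotents, as in the paper's convention), and products and $\ast$-images of group-like elements are again group-like, so it suffices to prove
\[ \omega(P\otimes ABD\otimes DBC\otimes Q)=\omega(P\otimes AD\otimes DC\otimes Q) \]
for group-like $A,B,C,D$ and arbitrary tensor words $P,Q$ satisfying the stated parity and support hypotheses. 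This really does yield the two-sided statement: since the locality ideal lives in the even subalgebra $T_0$, left and right multiplication is by even-length elements, which preserves the parity of the length of $Q$ and hence the matching between the parity of $Q$ and the one-sided causal condition; so the displayed differences, with $P,Q$ arbitrary, already span the ideal.

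Next I would invoke the Gaussian condition of Theorem \ref{extendomega}: for group-like arguments it iterates to a complete factorization of $\omega$ into single-slot time-ordered factors (built from the Feynman propagator $\Delta_F$) and pairwise factors $\Delta$ between distinct slots (the cut propagator, oriented by the wave-front-set sign convention, with $\ast$ applied to the even-numbered slots). Every factor not touching the two adjacent slots $ABD/AD$ and $DBC/DC$, and every factor built only from the spectator $D$, is identical in the two terms and cancels; the quotient of the two sides is then the exponential of the total contraction of $\log B$ with everything else. I would split these contractions into two families: contractions of $B$ with the \emph{matched} data (the straddling $D$, the outer words $P,Q$, the external slots, and $B$ with itself), each of which occurs identically in the two adjacent slots of opposite $\ast$-parity, and contractions of $B$ with the \emph{unmatched} operators $A$ and $C$, each of which sits in only one of the two slots.

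For the matched family the two copies of $B$ couple to each common partner with opposite sign, since the adjacent slots have opposite $\ast$-parity, so the off-diagonal terms cancel directly; the diagonal contributions (the $B$--$B$ and $B$--$D$ self-and-cross contractions) cancel by the Cutkosky condition $\tfrac12\Delta_F(g^\ast,g^\ast)+\tfrac12\Delta_F(g,g)+\Delta(g^\ast,g)=0$ and its polarization, which is exactly the cutting identity forced by $e^{-iL_F}\omega(G\otimes G\otimes 1\otimes\cdots)=1$. For the unmatched operators there is no partner to pair with, and here the one-sided support hypothesis does the work: because $\Delta_F$ agrees with the cut propagator $\Delta$ off the time-ordered diagonal, the Feynman self-contraction of $B$ with $A$ (resp.\ $C$) inside a single slot equals the cut cross-contraction of $B$ with $A$ (resp.\ $C$) across the cut, and the two cancel against the $\ast$-sign.

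The main obstacle is this last step, namely making the orientations match. The operator $A$ lives in the later (even-numbered, hence $\ast$-conjugated) slot and $C$ in the earlier one, so naively they appear to demand opposite causal inequalities; the resolution is that conjugation of the even slot reverses the effective time-ordering of its Feynman self-contraction, so that the single one-sided hypothesis ``no point of $B$ is $\leqslant$ a point of $A$ or $C$'' (and its mirror image when the length of $Q$ is odd) is precisely what aligns both cancellations. Tracking this $\ast$-induced reversal together with the sign and wave-front bookkeeping is the delicate part; once it is in place the total extra factor is $1$, the two terms agree, and $\omega$ annihilates every generator and hence the entire locality ideal.
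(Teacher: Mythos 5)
Your overall mechanism is the right one: this theorem is, as the paper itself remarks, essentially the ``largest time equation'' of 't~Hooft--Veltman, and your three families of cancellations (spectator contractions cancelling by the opposite $\ast$-parity of the two adjacent slots, diagonal $B$--$B$ and $B$--$D$ terms killed by the polarized Cutkosky normalization, unmatched contractions with $A$ and $C$ handled by the agreement of $\Delta_F$ with the cut propagator at non-time-ordered configurations) are the same ingredients the paper's proof runs on. But there are two genuine gaps. First, your central step --- the ``complete factorization'' of $\omega$ on group-like words into exponentials of single-slot $\Delta_F$-contractions and cross-slot $\Delta$-contractions, followed by taking the logarithm of the ratio of the two sides --- is simply not defined for a general Feynman measure associated to a cut local propagator: within-slot Feynman contractions between factors of overlapping support (for instance $A$ with $D$, or $B$ with $D$, whose supports are unconstrained relative to each other, or anything inside $P$ and $Q$) are products of Feynman propagators at coincident points, which is exactly the ultraviolet divergence the whole paper exists to treat. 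The paper disposes of this in one sentence --- ``we can assume that the propagator $\Delta$ is sufficiently regular, as we can obtain the general case from this by regularization and renormalization'' --- and your argument needs the same reduction before any of its manipulations make sense; without it the matched factors you propose to cancel need not individually exist.

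Second, the step you yourself flag as ``the delicate part'' --- aligning the $\ast$-induced reversal of effective time-ordering with the one-sided support hypothesis so that the unmatched contractions of $B$ with $A$ and with $C$ both cancel --- is deferred rather than carried out, and it is precisely where the content of the theorem lives. The paper's proof is engineered to avoid this global sign and orientation bookkeeping: it first reduces general $D$ to the case $D=1$ (replacing $A,C$ by $AD,CD$ when $\mathrm{supp}\,D$ is $\leqslant$ the support of $B$, and using two applications of the $D=1$ case, with $B$ replaced by $D$ and by $BD$, otherwise), then polarizes $B=b_1\cdots b_k$ and writes the claim as the vanishing of the alternating sum $\sum_{I\cup J=\{1,\ldots,k\}}(-1)^{|I|}\,\cdots\otimes AB_I\otimes B_JC\otimes\cdots$, which it cancels in pairs by moving the single factor $b_j$ of maximal support across the cut. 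In that organization only the contractions of one element have to be matched between the two slots, each comparison is a pointwise identity between $\Delta_F$ and $\Delta$ at configurations guaranteed non-time-ordered by maximality together with the support hypothesis, and no global exponential identity is invoked (the Cutkosky condition enters only through Theorem~\ref{extendomega}, which constructs the extension being used). If you insert the regularization reduction and either carry out the orientation computation you sketch or replace it by the paper's maximal-element pairing, your argument becomes a correct, if bulkier, generating-function rendition of the paper's proof.
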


\begin{proof}
  We use the notation of \ definition \ref{localityideal}. We prove this for
  elements with $n$ even; the case $n$ odd is similar. We can assume that the
  propagator $\Delta$ is sufficiently regular, as we can obtain the general
  case from this by regularization and renormalization. We will first do the
  special case when $D = 1$. We can assume that $B = b_1 \ldots b_k$ is
  homogeneous of some order $k$ and write $B_I$ for $\prod_{j \in I} b_j$. If
  $k = 0$ then the result is obvious as $B$ is constant and both sides are the
  same, so we can assume that $k > 0$. We show that if $k > 0$ then $\omega$
  vanishes on
  \[ \sum_{I \cup J =\{1, \ldots k\}} (- 1)^{|I|} \ldots \otimes Y_1 \otimes
     AB_I \otimes B_J C \otimes X_n \otimes \ldots \otimes X_1 \]
  by showing that the \ terms cancel out in pairs. This is because if $j$ is
  the index for which the support of $b_j$ is maximal then $\omega$ has the
  same value on
  
  $\ldots \otimes Y_1 \otimes AB_I b_j \otimes B_J C \otimes X_n \otimes
  \ldots \otimes X_1$
  
  and
  
  $\ldots \otimes Y_1 \otimes AB_I \otimes b_j B_J C \otimes X_n \otimes
  \ldots \otimes X_1$
  
  Now we do the case of general $D$. We can assume that the support of $D$ is
  either $\leqslant$ all points of the support of $B$ or there are no points
  of it that are $\leqslant$ any points in the support of $A$ or $C$. In the
  first case the result follows from the special case $D = 1$ by replacing $A$
  and $C$ by $AD$ and $CD$. In the second case it follows from 2 applications
  of the special case $D = 1$, replacing $B$ by $D$ and $BD$, that both terms
  are equal to $\ldots \otimes Y_1 \otimes A \otimes C \otimes X_n \otimes
  \ldots \otimes X_1$ and are therefore equal. 
\end{proof}

This proof, in the special case that $\omega$ vanishes on $B \otimes B - 1
\otimes 1$ for $B$ group-like, is more or less the proof of unitarity of the
S-matrix using the ``largest time equation'' given in {\cite[section
6]{hooft}}. The locality ideal is not the largest ideal on which $\omega$
vanishes, as $\omega$ also vanishes on $A \otimes 1 \otimes 1 \otimes B - A
\otimes B$; in other words we can cancel pairs $1 \otimes 1$ wherever they
occur.

\begin{theorem}
  \label{commute mod local}Elements of $T_0 S \Gamma_c SJ \Phi$ with
  spacelike-separated supports commute modulo the locality ideal.
\end{theorem}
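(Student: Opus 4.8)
The plan is to reduce the theorem to an elementary statement about interchanging two adjacent tensor factors and then to prove that statement by linearizing the defining relation of the locality ideal. Write $\mathcal{I}$ for the locality ideal. Any even element of $T_0 S \Gamma_c \omega SJ \Phi$ is a sum of words $a_1 \otimes \cdots \otimes a_{2k}$ with $a_i \in S \Gamma_c \omega SJ \Phi$, so it is enough to show that whenever $a, c \in S \Gamma_c \omega SJ \Phi$ have spacelike-separated supports we may interchange them inside an arbitrary word, $W \otimes a \otimes c \otimes Z \equiv \pm\, W \otimes c \otimes a \otimes Z \pmod{\mathcal{I}}$. Bubble-sorting the block of factors of one element past the block of the other then gives the result, and since both elements have even tensor degree the accumulated super-signs multiply to $+1$, so the two elements genuinely (super)commute.

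The key tool is a linearized form of the generators of $\mathcal{I}$. Taking $D = 1$ and $B = \exp(\beta b)$ group-like for a primitive $b \in \Gamma_c \omega SJ \Phi$ and a nilpotent parameter $\beta$, and reading off the coefficient of $\beta$ in the generator $W \otimes AB \otimes BC \otimes Z - W \otimes A \otimes C \otimes Z$, I obtain
\[ W \otimes (A b \otimes C + A \otimes b C) \otimes Z \in \mathcal{I} . \]
The decisive point is that spacelike separation of the support of $b$ from those of $A$ and $C$ makes both the $\leqslant$ form and the $\geqslant$ form of the support condition hold at once, so this relation is available no matter what the parity of the number of factors in $Z$ is; in particular it holds for every surrounding word $W, Z$.

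For the base case of two single fields $g, h \in \Gamma_c \omega SJ \Phi$ I would apply the displayed relation twice, once with $(A, C, b) = (1, h, g)$ and once with $(A, C, b) = (1, g, h)$, giving $g \otimes h + 1 \otimes gh \in \mathcal{I}$ and $h \otimes g + 1 \otimes hg \in \mathcal{I}$. Since $S \Gamma_c \omega SJ \Phi$ is (super)commutative we have $gh = (-1)^{|g||h|} hg$, so the two ``$1 \otimes (\text{product})$'' terms agree up to the sign $(-1)^{|g||h|}$; subtracting the relations with that sign cancels them and leaves exactly the super-commutator $g \otimes h - (-1)^{|g||h|} h \otimes g \in \mathcal{I}$. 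Thus spacelike-separated single fields super-commute (commuting for bosonic fields, anticommuting for fermionic ones), which is the heart of the matter.

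The remaining work, and the step I expect to be the main obstacle, is to upgrade this from single fields to arbitrary composite operators $a, c \in S \Gamma_c \omega SJ \Phi$ and to reinstate the surrounding word. Here I would use the full coefficient form of the group-like relation, $\sum_{(a)} a_{(1)} \otimes a_{(2)} c \in \mathcal{I}$ (the sum running over the coproduct of $a$ in $S \Gamma_c \omega SJ \Phi$), and induct on the number of fields in $a$ and $c$: the extreme terms of the coproduct sum are $a \otimes c$ and $1 \otimes a c$, while the middle terms have strictly fewer fields on the left and are controlled by the inductive hypothesis. The genuine difficulty is purely combinatorial bookkeeping --- tracking the coproduct middle terms and the superalgebra signs introduced whenever two factors are exchanged --- together with the routine check that the spacelike hypothesis keeps every instance of the support condition satisfied at each stage, which it does precisely because spacelike separation supplies both inequality directions. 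Once the elementary swap is established with an arbitrary context $W, Z$, the bubble-sort of the first paragraph finishes the proof.
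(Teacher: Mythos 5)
Your base case is fine, and the linearized relation $W\otimes (Ab\otimes C + A\otimes bC)\otimes Z\in\mathcal{I}$ is a correct consequence of the generators, but the inductive step you defer as ``combinatorial bookkeeping'' is a genuine obstruction, not bookkeeping. Take the first composite case, $a=g$ a single field and $c=h_1h_2$, where $h_1$ and $h_2$ have overlapping supports and only their union is spacelike from $g$. Your coproduct relation gives
\[ c\otimes g \;\equiv\; -\,1\otimes cg \;-\; h_1\otimes h_2g \;-\; h_2\otimes h_1g \pmod{\mathcal{I}}, \]
and the middle terms $h_i\otimes h_jg$ are \emph{not} spacelike-separated configurations: $h_1$ and $h_2$ are both pieces of $c$, so neither your inductive hypothesis (which requires spacelike separation) nor any generator of $\mathcal{I}$ (whose support condition constrains $B$ against $A$ and $C$) applies to them. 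The induction stalls exactly here. Worse, your sign accounting is wrong: in the special situation where all fields are pairwise spacelike, so that everything \emph{can} be reduced to the normal form $1\otimes(\cdot)$ using your own relations, one finds $a\otimes c\equiv (-1)^{\deg a}\,1\otimes ac$ (polynomial degree, i.e.\ number of fields), hence $g\otimes c\equiv -\,c\otimes g$ for purely bosonic $g$ and $c=h_1h_2$. So the adjacent transposition you aim to prove carries an extra sign $(-1)^{\deg a+\deg c}$ beyond the super-signs; these extra signs do cancel over full even-length words (each $\deg a_i$ occurs an even number of times in the bubble sort), so the theorem is safe, but your intermediate statement is false as stated and your claim that ``even tensor degree'' kills all signs misidentifies the mechanism.

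The paper avoids both problems by never linearizing: it reduces to \emph{group-like} degree-$2$ elements (even elements are polynomials in such, after extending scalars by nilpotents) and observes that each generator of $\mathcal{I}$ is an exact move multiplying two adjacent slots by a common group-like factor, with no coproduct cross-terms at all; spacelike separation makes both parity variants of the support condition hold, just as you noted. This yields $W\otimes X\otimes YZ\equiv WY\otimes XY\otimes YZ\equiv WY\otimes X\otimes Z$ and then the four-move shuffle
\[ A\otimes B\otimes C\otimes D \equiv AC\otimes B\otimes 1\otimes D \equiv AC\otimes 1\otimes 1\otimes BD \equiv AC\otimes D\otimes 1\otimes B \equiv C\otimes D\otimes A\otimes B. \]
If you want to salvage your argument, run your nilpotent-parameter trick in the opposite direction: prove the swap for group-like slots first (where insertion and cancellation of whole exponentials sidesteps the middle terms), and extract polynomial coefficients only at the very end.
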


\begin{proof}
  It is sufficient to prove this for group-like degree 2 elements, as if two
  even degree elements have spacelike-separated supports then they are
  polynomials in degree 2 elements with spacelike separated supports. We will
  work modulo the locality ideal. Suppose that the supports of the group-like
  elements $W \otimes X \otimes Z$ and $Y$ are spacelike-separated. Then \
  applying \ theorem \ref{localideal} twice gives
  \[ W \otimes X \otimes YZ = W Y \otimes X Y \otimes Y Z = W Y \otimes X
     \otimes Z \]
  Applying this 4 times for various values of $W$, $X$, $Y$, and $Z$ shows
  that if $A \otimes B$ and $C \otimes D$ are group-like and have spacelike
  separated supports, then
  \[ A \otimes B \otimes C \otimes D = AC \otimes B \otimes I \otimes D = AC
     \otimes I \otimes I \otimes BD = AC \otimes D \otimes I \otimes B = C
     \otimes D \otimes A \otimes B \]
  $\tmop{so} A \otimes B$ and $C \otimes D$ commute.
\end{proof}

Now we study when the quantum field theory $\omega$ is Hermitian, and show
that we can find a Hermitian quantum field theory associated to any Hermitian
local cut propagator, and show that the group of real renormalizations acts
transitively on them.

\begin{definition}
  \label{hermitianmeasure}We say that a Feynman measure $\omega$ is Hermitian
  if its extension to $T_{} S \Gamma_c \omega SJ \Phi$ is Hermitian
  when restricted to the even subalgebra $T_0 S \Gamma_c \omega SJ
  \Phi$.
\end{definition}

\begin{lemma}
  If the local cut propagator $\Delta$ is Hermitian, then it has a Hermitian
  Feynman measure associated with it. \ 
\end{lemma}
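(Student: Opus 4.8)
The plan is to obtain a Hermitian Feynman measure as a fixed point of the involution $*$ on the set of all Feynman measures associated with $\Delta$, and to produce that fixed point by a symmetrization inside the ultraviolet group. First I would check that $*$ preserves this set: if $\omega$ is a Feynman measure then $\omega^*$ is again continuous, linear, nondegenerate, and smooth on the diagonal, and a direct computation on the Gaussian condition shows that the cut propagator of $\omega^*$ is $\Delta^*$. Since $\Delta$ is Hermitian we have $\Delta^* = \Delta$, so $\omega^*$ is again associated with $\Delta$. By Theorem \ref{feynman existence} this set is nonempty, and by Theorem \ref{transitive} the group of complex renormalizations acts simply transitively on it, so it is a torsor under that group.

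Next I would translate Hermiticity into a group-theoretic condition. Fix any Feynman measure $\omega_0$ associated with $\Delta$; by simple transitivity there is a unique complex renormalization $g$ with $\omega_0^* = g\,\omega_0$. Applying $*$ once more and using $** = \mathrm{id}$ together with the compatibility of $*$ with the action forces the relation $g^* g = 1$, which is exactly the cocycle condition for the two-element group generated by $*$. A measure $\omega = h\,\omega_0$ in the orbit is Hermitian precisely when $\omega^* = \omega$, and unwinding this shows it is equivalent to expressing $g$ as a coboundary in terms of $h$ and $h^*$. Thus a Hermitian measure exists if and only if the class of $g$ in the nonabelian cohomology $H^1(\mathbb{Z}/2\mathbb{Z}, G)$ is trivial, where $G$ is the ultraviolet group with its involution $*$ (whose fixed points are the real renormalizations).

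I would then prove this vanishing using the filtered structure recorded after Theorem \ref{transitive}, where $G$ is a semidirect product of the pro-nilpotent group $G_{\geqslant 1}$ with $G_0$. On $G_{\geqslant 1}$ the cocycle is trivialized order by order along the filtration $G_{\geqslant n}$: since we work in characteristic zero one passes to the pro-nilpotent Lie algebra, where the correction is built by halving the anti-fixed part of $\log g$ and exponentiating, the abelian prototype being $h = \exp(\tfrac{1}{2}\log g)$, with Baker--Campbell--Hausdorff corrections absorbed at each higher step by a suitable element of $G_n$ realized via Theorem \ref{uvgroupstructure}. For the reductive part $G_0$, built from invertible linear maps, the corresponding vanishing is Hilbert 90 (Speiser's theorem) for $H^1$ of complex conjugation with values in a general linear group. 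Combining the two through the cohomology sequence of the semidirect product gives $H^1(\mathbb{Z}/2\mathbb{Z}, G) = 1$, hence a renormalization $h$ and a Hermitian base measure $\omega = h\,\omega_0$.

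Finally I would confirm that Hermiticity of the base measure passes to the extension used in Definition \ref{hermitianmeasure}. Because Theorem \ref{extendomega} reconstructs the extension to $Te^{iL_F} S \Gamma_c \omega SJ \Phi$ uniquely from the base measure and the cut propagator through the Gaussian and Cutkosky conditions, and both of these conditions are compatible with $*$ once $\Delta$ is Hermitian, the extension restricted to the even subalgebra $T_0 S \Gamma_c \omega SJ \Phi$ is automatically Hermitian. The step I expect to be the main obstacle is the order-by-order trivialization in $G_{\geqslant 1}$: one must verify that the Hermiticity defect at each stage is local (supported on the diagonal) and smooth along it, so that it is realized by an element of $G_n$ as in Theorem \ref{uvgroupstructure}, and that the symmetric choice of correction does not disturb the Hermiticity already achieved at lower orders; this is exactly where characteristic zero and the pro-nilpotence of $G_{\geqslant 1}$ are indispensable.
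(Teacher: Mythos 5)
Your reformulation of the lemma as a fixed-point problem is legitimate: the set of Feynman measures associated with $\Delta$ is a torsor under complex renormalizations by theorem \ref{transitive}, the involution is compatible with the action via $\rho(\omega)^*=\rho^*(\omega^*)$, and existence of a Hermitian measure is indeed equivalent to the triviality of the class of $g$ (where $\omega_0^*=g\,\omega_0$) in nonabelian $H^1(\mathbb{Z}/2,G)$. The genuine gap is your vanishing claim for the reductive part: $G_0$ is not a general linear group but a group of smooth sections over spacetime --- for instance its subgroup acting trivially on $\omega S^{>0}J\Phi$ is the group of nowhere-vanishing smooth complex functions on $M$ --- and for such mapping groups Hilbert 90 fails because the topology of $M$ enters. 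Concretely, take $M=S^1$ with the trivial causality relation (closed, reflexive, transitive, hence an admissible spacetime): a cocycle for the conjugation action on $C^\infty(S^1,\mathbb{C}^\times)$ is a map $g\colon S^1\to U(1)$, while a coboundary $h^{-1}\bar h$ has winding number $-2\operatorname{wind}(h)$, which is even; so any $g$ of odd winding number gives a nontrivial class. Equivalently, such cocycles classify real forms of the trivialized complex bundle, and the M\"obius line bundle is a nontrivial real form with trivial complexification. Hence $H^1(\mathbb{Z}/2,G_0)\neq 1$ in general, and your conclusion $H^1(\mathbb{Z}/2,G)=1$ is false on topologically nontrivial spacetimes, which the lemma is meant to cover. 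This does not refute the lemma: the \emph{particular} class attached to $\omega_0$ may always be trivial (e.g.\ its nondegeneracy component is $\bar g_0/g_0=u^{-2}$ with $u=g_0/|g_0|$, visibly a coboundary), but proving that is precisely the missing content. Two further glossed points: the involution on Feynman measures must be set up through the even tensor subalgebra, since $*$ carries $e^{iL_F}$ to $e^{-iL_F}$ and is $-1$ on $\Gamma_c\omega SJ\Phi$ --- this is exactly why definition \ref{hermitianmeasure} is phrased on $T_0 S\Gamma_c\omega SJ\Phi$ --- and your semidirect-product gluing of nonabelian $H^1$ requires the pro-unipotent vanishing for all twists of the involution, not just the given one (that part is repairable; the $G_0$ step is not, as stated).

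For contrast, the paper sidesteps cohomology entirely and argues constructively: it Hermitian-symmetrizes the regularization of theorem \ref{regularization} by replacing the regularized propagator with the average of itself and its Hermitian conjugate, then checks by an explicit sign-factor computation that the resulting meromorphic Feynman measure is already Hermitian on the even subalgebra $T_0 S\Gamma_c\omega SJ\Phi$ (though not on all of $T S\Gamma_c\omega SJ\Phi$), and finally removes the poles as in theorem \ref{renormalizemeasure} using \emph{real} renormalizations, which preserve Hermiticity by the compatibility $\rho(\omega)^*=\rho^*(\omega^*)$. That route works uniformly on arbitrary spacetimes because the Hermitian structure is built in before renormalization, rather than restored afterwards by solving a descent problem. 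To rescue your approach you would need either to restrict to spacetimes where the bundle-theoretic obstruction vanishes, or to show directly that the specific cocycle produced by a Feynman measure is a coboundary; as written, the appeal to Speiser's theorem does not do this.
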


\begin{proof}
  We can assume that the regularization of $\Delta$ is also Hermitian, by
  replacing it by the average of itself and its Hermitian conjugate. We can
  check directly that the meromorphic family of Feynman measures associated to
  this Hermitian regularization is Hermitian on $T_0 S \Gamma_c \omega
  SJ \Phi$ (but not on the whole of $T_{} S \Gamma_c \omega SJ
  \Phi$); in other words $\omega (A_n \otimes \ldots \otimes A_1) = \omega
  (A_1^{\ast} \otimes \ldots \otimes A_n^{\ast})^{\ast}$ if $n$ is even. For
  example, we get a sign factor of $- 1^{\deg (A_2) + \deg (A_4) + \ldots}$ in
  the definition of $\omega$ on the first term, a sign factor of \ $- 1^{\deg
  (A_1) + \deg (A_3) + \ldots}$ form the definition of $\omega$ for the second
  term, whose quotient is the factor $- 1^{\deg (A_1) + \deg (A_2) + \ldots}$
  coming from the action of $\ast$ on $A_n \otimes \ldots \otimes A_1$ because
  $n$ is even. We can then renormalize using real renormalizations to
  eliminate the poles, and the resulting Feynman measure will be Hermitian.
\end{proof}

\begin{lemma}
  If a Feynman measure $\omega$ is Hermitian and $\rho$ is a complex
  renormalization, then $\rho (\omega)$ is Hermitian if and only if $\rho$ is
  real. In particular the subgroup of (real) renormalizations acts simply
  transitively on the Hermitian Feynman measures associated with a given cut
  local propagator. 
\end{lemma}

\begin{proof}
  This follows from $\rho (\omega)^{\ast} = \rho^{\ast} (\omega^{\ast})$, and
  the fact that complex renormalizations act simply transitively on Feynman
  measures associated with a given cut local propagator. 
\end{proof}

Next we show that $\omega$ is a state \ (in other words the space of physical
states is positive definite) when the cut propagator $\Delta$ is positive, by
using a representation of the physical states as a space of distributions. We
define the space $H_n$ of $n$-particle states to be the space of continuous
linear maps $S^n \Gamma \omega \Phi \rightarrow \tmmathbf{C}$ (considered as
compactly supported symmetric distributions on $M^n$) whose wave front sets \
have no positive or negative elements, with a sesquilinear form given by \
\[ \langle a, b \rangle = \int_{x, y \in M^n} a (x_1, \ldots) \prod_j \Delta
   (x_j, y_j) b (y_j, \ldots)^{\ast} dxdy. \]
This is similar to the usual definition of the inner product on the space of
states of a free field theory, except that we are using distributions rather
than smooth functions. We check this is well defined. To show the product of
distributions in the integral is defined we need to check that no sum of
non-zero elements of the wave front sets is zero, and this follows because
nonzero elements of the wave front set of the product of propagators are of
the form $(p, q)$ with $p > 0$ and $q < 0$, but $a$ and $b$ by assumption have
no positive or negative elements in their wave front sets. \ The integral over
$M^n$ is well defined because $a$ and $b$ have compact support.

\begin{lemma}
  There is a map $f$ from $T_0 S \Gamma_c \omega SJ \Phi$ to the
  orthogonal direct sum$\oplus H_{_n}$ with
  \[ \omega (AB) = \langle f (A^{\ast}), f (B) \rangle . \]
\end{lemma}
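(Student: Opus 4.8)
The plan is to realize $\omega$ in a GNS/Fock form by taking $f(B)$ to record the ``partially contracted'' value of $\omega$ on $B$ in which exactly $n$ of the field legs of $B$ are left external, and then to check that pairing $f(A^*)$ against $f(B)$ with the cut propagators $\Delta$, as in the definition of $\langle\,,\,\rangle$ on $H_n$, reassembles $\omega(AB)$. Concretely, I would use the coaction $\delta$ of $SJ\Phi$ on $\omega SJ\Phi$, extended to products and to the tensor algebra, which splits each composite field into the part $\omega$ integrates tensored with the field legs left external. Writing $\delta(B)=\sum B'\otimes B''$ with $B''$ the external part lying in $S^n\Gamma_c\omega\Phi$, I set $f(B)=\sum_n f_n(B)$ with $f_n(B)=\sum \omega(B')\,B''$, regarded as a compactly supported symmetric distribution on $M^n$; in short $f=(\omega\otimes\mathrm{id})\circ\delta$. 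It suffices to define and verify everything on group-like generators, since these generate and both sides are determined by their values there.

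First I would check that $f$ lands in $\oplus H_n$, i.e.\ that each $f_n(B)$ has wave front set containing no purely positive or purely negative elements. This is the same wave-front induction used in theorem \ref{extendomega}: the distributions defining $\omega$ are built from cut and Feynman propagators whose products stay in the class with no positive or negative wave front elements, and leaving $n$ legs external does not spoil this property. Hence $f_n(B)$ is a legitimate element of $H_n$ and the integral defining $\langle f(A^*),f(B)\rangle$ converges, because the cut propagators joining the two external distributions contribute nonzero wave front elements only of the form $(p,q)$ with $p>0$ and $q<0$, which cannot cancel against the admissible wave front sets of $f(A^*)$ and $f(B)$.

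The heart of the argument is the factorization identity $\omega(AB)=\langle f(A^*),f(B)\rangle$. I would prove it by reorganizing the Feynman-diagram sum defining $\omega(AB)$ according to which propagators cross the cut separating the tensor factors coming from $A$ from those coming from $B$. By the rules in theorem \ref{extendomega}, propagators joining two factors on the same side of the cut are exactly the contractions summed in $f(A^*)$ and in $f(B)$, while propagators crossing the cut are cut propagators $\Delta$; summing first over the internal contractions on each side (which produces $f(A^*)$ and $f(B)$) and then over the matchings across the cut (which is the contraction $\prod_j\Delta(x_j,y_j)$ in the form on $H_n$) is precisely the Gaussian condition applied across the central cut. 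This is the combinatorial content of the Gaussian and Cutkosky conditions, now used even when the two sides have overlapping supports, which is legitimate because the wave front restriction defining $H_n$ keeps all products of distributions defined.

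The step I expect to be the main obstacle is the sign and conjugation bookkeeping needed to turn this raw factorization into the stated identity. The involution $A\mapsto A^*$ reverses the order of tensor factors, conjugates coefficients, and is $-1$ on $\Gamma_c\omega SJ\Phi$, while the form on $H_n$ carries the conjugation $b\mapsto b^*$ together with the alternating sign $(-1)^{\deg(A_2A_4\cdots)}$ built into the definition of $\omega$ on the tensor algebra. I would need to verify that the conjugation in $\langle\,,\,\rangle$ exactly absorbs the $\ast$ applied to $A$ and that the super signs produced by reversing the order of the tensor factors match those appearing in the diagrammatic expansion of $\omega$. Once this is confirmed on group-like generators, the identity extends to all of $T_0 S\Gamma_c\omega SJ\Phi$ by the multiplicativity of the coaction.
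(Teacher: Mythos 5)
Your proposal takes essentially the same route as the paper: the paper also defines $f$ as $(\omega\otimes\mathrm{id})$ composed with the coaction of $SJ\Phi$ (made concrete by pushing the compactly supported distribution $\omega(A'_{11}A'_{12}\cdots)$ forward from $M^n$ to $M^m$, multiplying by the external function $A''_{11}A''_{12}\cdots$, and symmetrizing), and then obtains $\omega(AB)=\langle f(A^\ast),f(B)\rangle$ directly from the Gaussian condition $\omega(AB)=\sum\omega(A')\,\Delta(A'',B'')\,\omega(B')$ of theorem \ref{extendomega}, exactly your ``reorganize by which propagators cross the cut'' step. The sign and conjugation bookkeeping you flag as the main obstacle is left implicit in the paper's proof as well, so your outline is faithful to its argument.
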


\begin{proof}
  By theorem \ref{extendomega}, $\omega (AB)$ is given by
  \[ \sum \omega (A') \Delta (A'', B'') \omega (B') \]
  where $\sum A' \otimes A''$ is the image of $A$ under the coaction of
  $\Gamma_c SJ \Phi$. This is equal to $\langle f (A^{\ast}), f (B)
  \rangle$ if we define $f (A)$ as follows. Suppose that $A = A_{11} A_{12}
  \ldots \otimes A_{21} A_{22} \ldots$., and let the image of $A_{jk}$ under
  the coaction of $\Gamma_c SJ \Phi$ be $\sum A_{jk}' \otimes
  A_{jk}''$. Then $\omega ( A'_{11} A_{12}' \ldots \otimes A_{21}' A_{22}'
  \ldots$.) can be regarded as a distribution on $M^n$, where $n$ is the total
  number of elements $A_{jk}$. On the other hand, $A''_{11} A''_{12} \ldots
  A''_{21} A''_{22} \ldots$ is a function on $M^m$, where $m$ is the sum of
  the degree of the elements $A''_{jk}$, in other words the number of fields
  occurring in them. There is also a map from $m$ to $n$, which induces a map
  from $M^n$ to $M^m$, and so by push-forward of densities a map from
  densities on $M^n$ to densities on $M^m$. The image $f (A)$ is then given by
  taking the push-forward from $M^n$ to $M^m$ of the compactly supported
  distribution $\omega ( A'_{11} A_{12}' \ldots \otimes A_{21}' A_{22}'
  \ldots$.) on $M^n$, multiplying \ by the function $A''_{11} A''_{12} \ldots
  A''_{21} A''_{22} \ldots$ on $M^m$, symmetrizing the result, \ and repeating
  this for each summand of \ $\sum A_{jk}' \otimes A_{jk}''$.
\end{proof}

\begin{corollary}
  If the cut local propagator $\Delta$ is positive, then $\omega : Te^{iL_F} S
  \Gamma_c \omega SJ \Phi \rightarrow \tmmathbf{C}$ is a state.
\end{corollary}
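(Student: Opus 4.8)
The plan is to verify the two conditions defining a state: normalization $\omega (1) = 1$ and positivity $\omega (a^{\ast} a) \geqslant 0$ for every $a$ in the algebra. Normalization is immediate from the non-degeneracy condition of a Feynman measure with the function $g$ normalized to $1$, the unit of $Te^{iL_F} S \Gamma_c \omega SJ \Phi$ being the degree-zero element on which $\omega$ evaluates to $1$. The real content is positivity, and this is the only place where the hypothesis that $\Delta$ is positive enters.

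For positivity I would use the map $f$ of the preceding lemma. Its construction applies to elements of arbitrary (not only even) tensor degree and produces a map into the orthogonal direct sum $\oplus_n H_n$ satisfying $\omega (AB) = \langle f (A^{\ast}), f (B) \rangle$. Taking $A = a^{\ast}$ and $B = a$ and using $(a^{\ast})^{\ast} = a$ turns this into
\[ \omega (a^{\ast} a) = \langle f (a), f (a) \rangle, \]
so positivity of $\omega$ is \emph{equivalent} to positive semidefiniteness of the sesquilinear form on $\oplus_n H_n$. Since the summands are mutually orthogonal, it suffices to prove the form is positive semidefinite on each $H_n$ separately.

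The form on $H_n$ is computed by the kernel $\prod_j \Delta (x_j, y_j)$, the $n$-fold external product of the cut propagator, restricted to the symmetric distributions comprising $H_n$. The crux is to pass from the assumption that $\Delta$ is positive, i.e. $\Delta (f^{\ast}, f) \geqslant 0$ on test sections, to positivity of this product kernel. The clean route is a GNS / reproducing-kernel construction: positivity of $\Delta$ exhibits it as $\Delta (f^{\ast}, g) = \langle \iota f, \iota g \rangle$ for a map $\iota$ into a Hilbert space $\mathcal{H}$, and then the $n$-particle form is simply the inner product of $\mathcal{H}^{\otimes n}$ restricted to symmetric tensors, which is automatically Hermitian and $\geqslant 0$ (tensor powers of a kernel of positive type are again of positive type, and symmetrization is an orthogonal projection). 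The summand $n = 0$ just recovers the vacuum normalization.

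The main obstacle is that the elements of $H_n$ are \emph{distributions}, subject only to the wave front condition of having no positive or negative elements, whereas positivity of $\Delta$ is assumed for genuine test sections. I would bridge this gap exactly as in the paragraph preceding the lemma: the wave front conditions are precisely what make every product $\prod_j \Delta (x_j, y_j)$ and the integral over $M^n$ well defined and finite, and positivity then extends from test sections to these distributions by approximating within the class that respects the wave front constraints, so that the limiting form is still computed by the same convergent integral. Granting this, $\langle f (a), f (a) \rangle \geqslant 0$ for all $a$, and together with the normalization $\omega$ is a state.
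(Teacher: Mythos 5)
Your proposal is correct and takes essentially the same route as the paper, whose entire proof is the observation that by the preceding lemma $\omega(A^{\ast}A)=\langle f(A),f(A)\rangle$ and that positivity of $\Delta$ makes the sesquilinear form on each $H_n$ positive. The details you add --- the normalization check, the GNS/tensor-power argument that the $n$-fold product kernel $\prod_j \Delta(x_j,y_j)$ is of positive type, and the wave-front-set approximation needed to evaluate the form on distributions --- are correct elaborations of steps the paper asserts without proof.
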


\begin{proof}
  This follows from the previous lemma, because if $\Delta$ is positive then
  so is the sesquilinear form $\langle, \rangle$ on $H_n$, and therefore
  $\omega (A^{\ast} A) = \langle f (A), f (A) \rangle \geqslant 0$.
\end{proof}

\section{\label{interactingqft}Interacting quantum field theories}

We construct the quantum field theory of a Feynman measure and a compactly
supported Lagrangian, by taking the image of the free field theory $\omega$
under an automorphism $e^{iL_I}$ where $L_I$ is the interaction part of the
Lagrangian. This automorphism is only well defined if the interaction
Lagrangian $L_I$ has infinitesimal coefficients, so the interacting quantum
field theories we construct are perturbative theories taking values in rings
of formal power series $\tmmathbf{C}[\tmmathbf{\lambda}]
=\tmmathbf{C}[\lambda_1, \ldots]$ in the coupling constants $\lambda_1,
\ldots$. (By ``infinitesimal'' we mean elements of formal power series rings
with vanishing constant term.) We then lift the construction to all actions
(possible without compact support) by showing that infra-red divergences
cancel up to inner automorphisms.

\begin{lemma}
  The Hopf algebra $S \Gamma_c \omega SJ \Phi$ acts on \ the algebra
  $\text{$T_0 S \Gamma_c \omega SJ \Phi$}$, and maps the locality ideal
  to itself. Group-like Hermitian elements of the Hopf algebra $S \Gamma_c
  \omega SJ \Phi [[\tmmathbf{\lambda}]]$ preserve the subset of
  positive elements, and therefore act on the space of states of $T_0 S
  \Gamma_c \omega SJ \Phi [[\tmmathbf{\lambda}]]$.
\end{lemma}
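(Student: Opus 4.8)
The plan is to give $T_0 S\Gamma_c\omega SJ\Phi$ the structure of a module algebra over the commutative cocommutative Hopf algebra $H=S\Gamma_c\omega SJ\Phi$ in the standard ``diagonal'' way: an element $h\in H$ acts on a tensor $A_1\otimes\cdots\otimes A_n$ by distributing its iterated coproduct $\sum h_{(1)}\otimes\cdots\otimes h_{(n)}$ across the factors and multiplying into each one, so that $h\cdot(A_1\otimes\cdots\otimes A_n)=\sum (h_{(1)}A_1)\otimes\cdots\otimes(h_{(n)}A_n)$. Coassociativity makes this compatible with concatenation, so $T H$ becomes an $H$-module algebra, and since the action preserves the number of tensor factors it restricts to the even-degree subalgebra $T_0 H$. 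Group-like elements $g$ then act as algebra automorphisms by $g\cdot(A_1\otimes\cdots\otimes A_n)=gA_1\otimes\cdots\otimes gA_n$. Following the paper's convention of allowing nilpotent extensions of the base ring, where enough group-like elements exist, I will check the remaining assertions on such $g$; stability of a subspace under all of them is what is needed for it to be an $H$-submodule.

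The substance of the lemma is that this action maps the locality ideal of Definition \ref{localityideal} to itself, and the one point needing care is that multiplying a factor by $g$ enlarges its support, which threatens the support hypotheses built into the generators. The key observation is that the defining condition only constrains $B$ against $A$ and $C$, while the factor $D$ is completely unconstrained. So when I apply a group-like $g$ to a generator
\[ \ldots\otimes Y_1\otimes ABD\otimes DBC\otimes X_n\otimes\cdots\otimes X_1 - \ldots\otimes Y_1\otimes AD\otimes DC\otimes X_n\otimes\cdots\otimes X_1, \]
I obtain, using commutativity of $H$ and the fact that the iterated coproduct of $g$ is $g\otimes\cdots\otimes g$, exactly the same expression with $D$ replaced by $gD$ and each context factor $X_i,Y_j$ replaced by $gX_i,gY_j$, while $A$, $B$, $C$, and hence all the supports entering the hypothesis, are left untouched. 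Since $gD$ is again group-like and $A,B,C$ are unchanged, this is once more a generator of the locality ideal, for both the $n$ even and $n$ odd cases (the $\leqslant$/$\geqslant$ condition on $B$ versus $A,C$ survives verbatim). This ``absorb into $D$'' step is exactly what makes locality survive the interaction, and I expect it to be the only genuinely nontrivial point; the reduction from general $h$ to group-like $g$ is then formal.

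For positivity, recall that $\ast$ is $-1$ on $\Gamma_c\omega SJ\Phi$ and conjugate-linear, so $(iL)^\ast=iL$ and the relevant exponentials $e^{iL}$ are Hermitian group-like elements in the sense $g^\ast=g$. For such $g$ the diagonal action is a $\ast$-automorphism: using $g^\ast=g$ and commutativity of $H$, one checks $(g\cdot x)^\ast=g\cdot(x^\ast)$ directly on a tensor $A_1\otimes\cdots\otimes A_n$, where $\ast$ reverses the factors and conjugates each. Combined with the fact that $g$ acts as an algebra homomorphism, this gives $g\cdot(b^\ast b)=(g\cdot b)^\ast(g\cdot b)$, so $g$ sends each generator $b^\ast b$ of the positive cone to a positive element and therefore preserves the subset of positive elements. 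Finally, since $g$ is group-like its counit is $1$, so $g\cdot 1_T=1_T$; hence for any state the functional obtained by composing with $g$ is again positive and normalized, which is the asserted action of the Hermitian group-like elements of $S\Gamma_c\omega SJ\Phi[[\tmmathbf{\lambda}]]$ on the space of states of $T_0 S\Gamma_c\omega SJ\Phi[[\tmmathbf{\lambda}]]$.
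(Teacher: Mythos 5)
Your proposal is correct and takes essentially the same route as the paper: the positivity half (Hermitian group-like elements are $\ast$-automorphisms fixing $1_T$, hence preserve the cone generated by elements $a^{\ast}a$ and so act on states) is exactly the paper's one-sentence argument. The paper treats the locality-ideal half as immediate and gives no details; your absorb-$g$-into-$D$ computation, exploiting that the support hypotheses in the definition of the ideal constrain only $B$ against $A$ and $C$ while $D$ is unconstrained, and that $gD$ is again group-like, is precisely the verification that definition was designed to make work.
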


\begin{proof}
  Group-like elements are algebra automorphisms, and if they are also
  Hermitian they commute with the involution $\ast$. In particular group-like
  Hermitian elements preserve the set of positive elements (generated by
  positive linear combinations of elements of the form $a^{\ast} a$), and so
  map positive linear forms to positive linear forms.
\end{proof}

\begin{definition}
  The quantum field theory of a Lagrangian $L = L_F + L_I$, where $L_I$ has
  compact support and infinitesimal coefficients, is $e^{- iL} \omega : T_0 S
  \Gamma_c \omega SJ \Phi \rightarrow
  \tmmathbf{C}[[\tmmathbf{\lambda}]]$.
\end{definition}

The Hopf algebra $S \Gamma_c \omega SJ \Phi$ acts on the vector space
$S \Gamma_c \omega SJ \Phi$ by multiplication, so group-like elements
of the form $e^{iL_F + iL_I}$ take $S \Gamma_c \omega SJ \Phi$ to
$e^{iL_F} S \Gamma_c \omega SJ \Phi$ and $T_0 S \Gamma_c \omega
SJ \Phi$ to $T_0 e^{iL_F} S \Gamma_c \omega SJ \Phi$. Since
$\omega$ is in the dual of $T_0 e^{iL_F} S \Gamma_c \omega SJ \Phi$,
this shows that $e^{- iL} \omega$ is in the dual of $T_0 S \Gamma_c \omega SJ
\Phi$.

\begin{corollary}
  (Locality) Elements of $T_0 S \Gamma_c \omega SJ \Phi$ with
  spacelike-separated supports commute when acting on the space of physical
  states of $e^{- iL} \omega$.
\end{corollary}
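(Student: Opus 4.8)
The plan is to show that for spacelike-separated $a,b\in T_0 S\Gamma_c\omega SJ\Phi$ the commutator $ab-ba$ sends every physical state to $0$. Since the physical states are $T_0 S\Gamma_c\omega SJ\Phi$ modulo the left kernel $N$ of the functional $\omega':=e^{-iL}\omega$, with an element acting by left multiplication, this is exactly the assertion that $(ab-ba)y\in N$ for every $y$. I would deduce this from two ingredients that are already in place: that $\omega'$ kills the locality ideal, and that spacelike-separated elements commute modulo that ideal (Theorem \ref{commute mod local}).

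First I would check that $\omega'$ vanishes on the locality ideal $J$ of $T_0 S\Gamma_c\omega SJ\Phi$ (Definition \ref{localityideal}). The element $e^{iL}=e^{iL_F+iL_I}$ is group-like, hence an algebra isomorphism of $T_0 S\Gamma_c\omega SJ\Phi$ onto $T_0 e^{iL_F}S\Gamma_c\omega SJ\Phi$, and by the lemma on the action of $S\Gamma_c\omega SJ\Phi$ (together with the relabelling $S\Gamma_c\omega SJ\Phi\to e^{iL_F}S\Gamma_c\omega SJ\Phi$ by $e^{iL_F}$) it carries $J$ onto the locality ideal of the target algebra. Since $\omega$ annihilates that ideal by Theorem \ref{localideal}, the pullback $\omega'(x)=\omega(e^{iL}x)$ annihilates $J$.

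The key structural step is that $J$ is a \emph{two-sided} ideal that is closed under $\ast$. Two-sidedness is immediate from Definition \ref{localityideal}. For $\ast$-closure I would apply $\ast$ to a generator: $\ast$ reverses the tensor factors and conjugates each, turning the two distinguished slots $ABD\otimes DBC$ into $(DBC)^\ast\otimes(ABD)^\ast=C^\ast B^\ast D^\ast\otimes D^\ast B^\ast A^\ast$, which is again of the required shape after the relabelling $A\mapsto C^\ast,\ C\mapsto A^\ast,\ B\mapsto B^\ast,\ D\mapsto D^\ast$ (and $D^\ast$ is still group-like). Because we work inside the even subalgebra $T_0$, the number of tensor factors following the distinguished pair has the same parity before and after applying $\ast$, so the relevant alternative ($n$ even with the $\leqslant$ condition, or $n$ odd with the $\geqslant$ condition) is preserved; the supports of $A,B,C,D$ do not move under $\ast$, and the relabelling carries the support hypothesis to itself. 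Thus $\ast$ permutes the generators of $J$ up to sign, and $J$ is a $\ast$-closed two-sided ideal.

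Combining these facts, $J$ is a left ideal closed under $\ast$ on which $\omega'$ vanishes, so $J$ is contained in the left kernel $N$ of $\omega'$. Finally, given $a,b$ with spacelike-separated supports, Theorem \ref{commute mod local} gives $ab-ba\in J$; as $J$ is two-sided, $(ab-ba)y\in J\subseteq N$ for every $y$, whence $\overline{aby}=\overline{bay}$ in the quotient $T_0 S\Gamma_c\omega SJ\Phi/N$. Hence $a$ and $b$ commute as operators on the physical states of $e^{-iL}\omega$. The one point that genuinely needs care is the $\ast$-closure of the locality ideal, namely checking that the $\leqslant/\geqslant$ dichotomy built into Definition \ref{localityideal} is exactly matched by the parity bookkeeping forced by restricting to $T_0$; everything else is formal bookkeeping with ideals and the defining properties of $N$.
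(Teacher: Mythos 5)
Your proposal is correct and takes essentially the same route as the paper: $\omega$ vanishes on the locality ideal (theorem \ref{localideal}), the group-like element $e^{-iL}$ carries the locality ideal into the locality ideal of $T_0 e^{iL_F} S \Gamma_c \omega SJ \Phi$, and theorem \ref{commute mod local} then yields commutativity on the physical states. Your only addition is to make explicit why vanishing on the ideal puts it inside the left kernel---namely the two-sidedness and the $\ast$-closure of the ideal of definition \ref{localityideal}, including the parity bookkeeping forced by working in $T_0$---a verification the paper leaves implicit and which you carry out correctly.
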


\begin{proof}
  By theorem \ref{localideal} the operators of the locality ideal act
  trivially on the space of physical states of $\omega$. Since $e^{- iL}$
  preserves the locality ideal, the locality ideal also acts trivially on the
  space of physical states of $e^{- iL} \omega$. \ By lemma \ref{commute mod
  local} this implies that operators with spacelike separated supports commute
  on this space. 
\end{proof}

This constructs the quantum field theory of a Lagrangian whose interaction
part has compact support (and is infinitesimal). We now extend this to the
case when the interaction part need not have compact support. We do this by
using a cutoff function to give the Lagrangian compact support, and then we
then try to show that the result is independent of the choice of cutoff
function, provided it is 1 in a sufficiently large region. To do this we need
to assume that spacetime is globally hyperbolic, and we also find that the
result is not quite independent of the choice of cutoff.

If $f$ is a smooth function on $M$ then multiplication by $f$ is a linear
transformation of $\Gamma \omega SJ \Phi$ and therefore induces a
homomorphism of $S \Gamma \omega SJ \Phi$, denoted by $A \rightarrow
A^f$. If $A = e^{iL}$ is group-like, then $A^f = e^{iLf}$. If $f$ has compact
support then so does $A^f$ so that $A^f \omega$ is defined. We try to extend
the definition of $A^f \omega$ to more general functions $f$ in the hope that
we can take $f$ to be close to 1.

\begin{lemma}
  Suppose that $f$ and $g$ are compactly supported smooth functions on $M$ and
  $n$ is even. If $f = g$ on the past of $A_1 \ldots A_n$ then (modulo the
  locality ideal)
  \[ \begin{array}{lll}
       e^{- iL_F} A^f \omega (A_n \otimes \ldots \otimes A_1) & = & e^{- iL_F}
       A^g \omega (A_n \otimes \ldots \otimes A_1)
     \end{array} \]
  If \ $f = g$ on the future of $A_1 \ldots A_n$ then
  \[ \begin{array}{lll}
       e^{- iL_F} A^f \omega (A_n \otimes \ldots \otimes A_1) & = & e^{- iL_F}
       A^g \omega (A^{g - f} \otimes 1 \otimes A_n \otimes \ldots \otimes A_1
       \otimes 1 \otimes A^{g - f})
     \end{array} \]
\end{lemma}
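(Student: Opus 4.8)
The plan is to reduce the whole statement to repeated use of the locality-ideal relation of Theorem \ref{localideal} (in the ``sliding'' form already exploited in Theorem \ref{commute mod local}), together with the supplementary facts that $\omega$ kills $1\otimes 1$ pairs and satisfies the Cutkosky normalization of Theorem \ref{extendomega}. The one algebraic input that makes everything go is that $h\mapsto A^h$ is an algebra homomorphism and that cutoffs add on group-like elements, so that $A^f = A^g A^{f-g}$ in the commutative Hopf algebra $S\Gamma_c\omega SJ\Phi[[\tmmathbf{\lambda}]]$; recall also, from the lemma asserting that $S\Gamma_c\omega SJ\Phi$ acts on $T_0S\Gamma_c\omega SJ\Phi$ preserving the locality ideal, that the group-like element $A^g$ acts as an automorphism fixing that ideal, so we may work modulo it throughout.

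First I would record the following pairwise replacement: for two adjacent slots in positions $j+1$ and $j$ of a tensor whose other entries are arbitrary,
\[ A^f A_{j+1} \otimes A^f A_j \equiv A^g A_{j+1} \otimes A^g A_j \pmod{\text{locality ideal}}, \]
whenever $\mathrm{supp}(f-g)$ has no point $\leqslant$ (resp.\ no point $\geqslant$) any point of $\mathrm{supp}(A_{j+1})\cup\mathrm{supp}(A_j)$ and the number of tensor factors to the right of slot $j$ is even (resp.\ odd). This is a single instance of the generator in Definition \ref{localityideal} with $B=A^{f-g}$, $D=A^g$, $A=A_{j+1}$, $C=A_j$: using commutativity, $A_{j+1}BD = A^fA_{j+1}$ and $DBC = A^fA_j$, while $A_{j+1}D = A^gA_{j+1}$ and $DC = A^gA_j$. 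The essential point, and the reason the statement refers to the past/future of the external operators $A_1\cdots A_n$ rather than of the cutoff region, is that the support hypothesis in Definition \ref{localityideal} constrains only $B=A^{f-g}$ against $A_{j+1},A_j$, i.e.\ against $K:=\mathrm{supp}(A_1\cdots A_n)$, the common factor $D=A^g$ being left unconstrained.

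For the past case, $f=g$ on the past of $K$ forces $\mathrm{supp}(f-g)$ to contain no point $\leqslant$ any point of $K$, which is exactly the even-parity hypothesis above. Since $n$ is even I would tile the positions $1,\dots,n$ by the adjacent pairs $\{1,2\},\{3,4\},\dots,\{n-1,n\}$; for each the right slot sits at an odd position, so the number of factors to its right is even and the replacement applies. Performing it on every pair turns all the $A^f$-dressing into $A^g$-dressing, which is the first identity.

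For the future case the causal inequality is reversed, so only the odd-parity (no point $\geqslant$) form of the replacement is available; the usable pairs are now $\{2,3\},\{4,5\},\dots,\{n-2,n-1\}$, which cover the interior positions but leave the two extreme slots untouched. After the interior replacements one is left with full $A^g$-dressing together with one residual factor $A^{f-g}$ in each of the two outermost slots. It then remains to migrate these residual factors into fresh outermost tensor slots, where the parity flip that accompanies the creation of a new slot, combined with the Cutkosky relation of Theorem \ref{extendomega} controlling a repeated group-like factor and the cancellation of $1\otimes 1$ pairs, should convert each $A^{f-g}$ into its inverse $A^{g-f}=(A^{f-g})^{-1}$, separated from the block by a $1$. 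This boundary bookkeeping, getting the reflection $A^{f-g}\mapsto A^{g-f}$ and the two separating $1$'s to come out exactly as written with the correct sign from the even length $n$, is the step I expect to be delicate, and is where the precise conventions for the action of group-like elements on $\omega$ and for the Cutkosky normalization must be used carefully; the interior argument, by contrast, is a routine iteration of the single locality relation above.
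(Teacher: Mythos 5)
Your first identity is handled correctly and by exactly the paper's method: the paper likewise proves it ``by repeatedly inserting $A^{f-g}\otimes A^{f-g}$ (using the fact that $n$ is even)'', i.e.\ your one-pair replacement from definition \ref{localityideal} with $B=A^{f-g}$ and the cutoff dressing hidden in the unconstrained common factor $D$, applied to the tiling $\{1,2\},\{3,4\},\ldots,\{n-1,n\}$ whose pairs each have an even number of factors to their right. (The only cosmetic difference is the sign convention: in the paper the action gives $A^f\omega(A_n\otimes\cdots\otimes A_1)=\omega(A^{-f}A_n\otimes\cdots\otimes A^{-f}A_1)$, so the dressings are $A^{-f},A^{-g}$ rather than $A^f,A^g$.)

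The future case, however, has a genuine gap precisely at the step you yourself flagged as delicate. Your interior replacements on $\{2,3\},\ldots,\{n-2,n-1\}$ are fine, but the remaining ``boundary bookkeeping'' is not a routine migration, and the mechanism you sketch (create fresh outer slots, then use parity flips plus the Cutkosky normalization to convert $A^{f-g}$ into its inverse) does not go through as described. The obstruction is structural: after the interior pass the extreme slots carry $A^f$ while their neighbours carry $A^g$, so adjacent slots no longer share a common dressing; in any further application of the generator of definition \ref{localityideal} the cutoff factor can then no longer be absorbed into the unconstrained $D$ (which must occur in \emph{both} slots), so it lands in $A$ or $C$, and the support hypothesis then constrains $\operatorname{supp}(f)$ or $\operatorname{supp}(g)$ themselves against $\operatorname{supp}(f-g)$ --- which fails, since a point of $\operatorname{supp}(f-g)$ is $\geqslant$ itself. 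The paper's proof avoids this by padding \emph{before} replacing: since the generator with $A=C=D=1$ has a vacuous support condition, $A^{-f}\otimes A^{-f}\equiv 1\otimes 1$ modulo the ideal, and $1\otimes 1$ pairs can be cancelled (the remark after theorem \ref{localideal}), so one may pass to the $(n+4)$-slot element
\[ A^{-f}\otimes A^{-f}\otimes A^{-f}A_n\otimes\cdots\otimes A^{-f}A_1\otimes A^{-f}\otimes A^{-f}. \]
Now your single replacement move is run with the \emph{offset} tiling $(2,3),(4,5),\ldots,(n+2,n+3)$: each pair has an odd number of factors to its right (matching the ``no point $\geqslant$'' hypothesis), each pair genuinely shares the factor $D=A^{-f}$ --- this is exactly why one pads with $A^{-f}\otimes A^{-f}$ rather than $1\otimes 1$ --- and the support condition involves only $\operatorname{supp}(f-g)$ against the $\operatorname{supp}(A_i)$. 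This converts all $n+2$ middle slots to $A^{-g}$-dressing while the outermost slots retain $A^{-f}=A^{-g}A^{g-f}$; factoring the group-like $A^{-g}$ out of all $n+4$ slots and using $\omega(A^{-g}X)=A^{g}\omega(X)$ gives precisely $A^{g}\omega(A^{g-f}\otimes 1\otimes A_n\otimes\cdots\otimes A_1\otimes 1\otimes A^{g-f})$. In particular no Cutkosky condition and no ``inversion'' of $A^{f-g}$ is needed: the $1$'s and the $A^{g-f}$'s in the statement are simply the residue of the uniform $A^{-g}$ factorization, not the output of a boundary reflection.
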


\begin{proof}
  We work modulo the locality ideal. The first equality follows from
  \[ A^{- f} A_n \otimes \ldots \otimes A^{- f} A_1 = A^{- g} A_n \otimes
     \ldots \otimes A^{- g} A_1 \]
  which in turn follows from theorem \ref{localideal} by repeatedly inserting
  $A^{f - g} \otimes A^{f - g}$ (using the fact that $n$ is even). The second
  equality follows in the same way from
  \begin{eqnarray*}
    &  & A^{- f} \otimes A^{- f} \otimes A^{- f} A_n \otimes \ldots \otimes
    A^{- f} A_1 \otimes A^{- f} \otimes A^{- f}\\
    & = & A^{- f} \otimes A^{- g} \otimes A^{- g} A_n \otimes \ldots \otimes
    A^{- g} A_1 \otimes A^{- g} \otimes A^{- f}
  \end{eqnarray*}
\end{proof}

This lemma shows that the restriction of $A^f \omega$ to arguments with
support in some fixed compact subset of $M$ is almost independent of the
choice of $f$ provided that $f$ is 1 on the convex hull of the argument:
different choices of $f$ are related by a locally inner automorphism of $T_0 S
\Gamma_c \omega SJ \Phi$, given by conjugation by elements of the form
$1 \otimes A^h$. If the spacetime is globally hyperbolic in the sense that the
convex hull of a compact set is contained in a compact set, then we can always
find a suitable $f$ that is 1 on the convex hull $X$ of the argument, so we
can construct the interacting quantum field theory. The result does not depend
on the choice of cutoff $f$ on the future of $X$, but does depend slightly on
the choice of cutoff in the past of $X$. The choice of cutoff in the past
corresponds to choices of the vacuum: roughly speaking, we turn off the
interaction in the distant past, which gives different vacuums. More
precisely, if we have two different cutoffs $f$ and $g$ then their vacuums,
which are the images of $e^{i (L_F + fL_I)}$ and $e^{i (L_F + gL_I)}$ will
differ by a factor of $e^{i (f - g) L_I}$. This does not change the observable
physics, beause all these choices of cutoffs give isomorphic quantum field
theories. However it does cause difficulties in constructing a Lorentz
invariant theory, because the choice of cutoff in the past is not Lorentz
invariant, so the vacuums are also not Lorentz invariant, or in other words
Lorentz invariance may be spontaneously broken. \ Presumably in theories with
a mass gap one can take the limit as the cutoff in the past tends to time $-
\infty$ and get a Lorentz invariant vacuum, but in theories with massless
particles such as QED there is an obstruction to constructing a Lorentz
invariant vacuum: Lorentz invariance might be spontaneously broken by infrared
divergences. This is a well known problem, which is not worth worrying about
too much, because the physical universe is not globally Lorentz invariant.

The time-ordered operator $T (A)$ of an element $A \in S \Gamma_c \omega
SJ \Phi$ is \ defined to be $1 \otimes A$. This has the property that
\[ T (A_n \ldots A_1) = 1 \otimes A_n \ldots A_1 = 1 \otimes A_n \otimes
   \ldots \otimes 1 \otimes A_1 = T (A_n) \ldots T (A_1) \]
whenever the composite fields $A_i \in \Gamma_c \omega SJ \Phi$ are in
order of increasing time of their supports. This formula is sometimes used as
a ``definition'' of the time-ordered product $T (A_n \ldots A_1)$, though this
does not define it when some of the factors have overlapping supports, and in
general the time-ordered product depends on the choice of Feynman measure
$\omega$. The scattering matrix $S$ of the quantum field theory is $S = T
(e^{iL_I}) = 1 \otimes e^{iL_I}$; this is essentially the LSZ \ reduction
formula of Lehmann, Symanzik, and Zimmermann {\cite{Lehmann}}.

We now show that if we change the Feynman measure, then we still get an
isomorphic quantum field theory provided we make a suitable change in the
Lagrangian. If we change $\omega$ to a different Feynman measure for the same
cut local propagator, these will differ by a unique renormalization $\rho$; in
other words the other Feynman measure will be $\rho \omega$. The quantum field
theory $e^{- iL} \omega$ changes under this renormalization of $\omega$ by
\begin{eqnarray*}
  e^{- iL} \omega (A_1 \otimes \ldots) & = & \omega (e^{iL} A_1 \otimes
  \ldots)\\
  & = & \rho (\omega) (\rho (e^{iL} A_1) \otimes \ldots)\\
  & = & \rho (e^{- iL}) \rho (\omega) (\rho (e^{- iL}) \rho (e^{iL} A_1)
  \otimes \ldots)
\end{eqnarray*}
so the quantum field theory stays the same under renormalization by $\rho$ if
we transform the Lagrangian by
\[ iL \rightarrow \log (\rho (\exp (iL)), \]
which is a nonlinear transformation because renormalizations need not commute
with products or exponentiation, and change the operators $A_n$ by
\[ A_n \rightarrow \rho (e^{- iL}) \rho (e^{iL} A_n) . \]
If $A_n$ is a simple operator and $\rho$ satisfies the condition of example
\ref{simple operator} then $\rho (e^{iL} A_n) = \rho (e^{iL}) \rho (A_n) =
\rho (e^{iL}) A_n$, so in this special case $A_n$ is unchanged, or in other
words simple operators are not renormalized. The behavior of composite
operators under renormalization can be quite complicated when expanded out in
terms of fields. The usual Wightman distributions used to construct a quantum
field theory use only simple operators, so the only effect of renormalization
on Wightman distributions comes from the nonlinear transformation of the
Lagrangian. This nonlinear transformation of Lagrangians is the usual action
of renormalizations on Lagrangians used in physics texts to convert an
infinite ``bare'' Lagrangian $L$ to a finite physical one $L_0$; the bare and
physical Lagrangians are related by $iL_0 = \log (\rho (\exp (iL))$, where
$\rho$ is an infinite renormalization taking an infinite Feynman measure, such
as the one given by dimensional regularization, to a finite one.

The orbit of a Lagrangian under this nonlinear action of the ultraviolet
group is in general infinite dimensional. It can sometimes be cut down to a
finite dimensional space as follows. As in example \ref{dyson}, we cut down to
the group of renormalizations of mass dimension at most 0, which acts on the
space of Lagrangians whose coupling constants all have mass dimension at least
0. If we also add the condition that the Lagrangian is Lorentz invariant, then
we sometimes get finite dimensional spaces of Lagrangians. The point is that
the classical fields themselves tend to have positive mass dimension, so if
the coupling constants all have non-negative mass dimension then the fields
appearing in any term of the Lagrangian have total mass at most $d$
(cancelling out the $- d$ coming from the density) which severely limits the
possibilities. At one time the Lagrangians with all coupling constants of
non-negative mass dimension were called renormalizable Lagrangians, though now
all Lagrangians are regarded as renormalizable in a more general sense where
one allows an infinite number of terms in the Lagrangian.

\section{\label{anomalies}Gauge invariance and anomalies}

If a Lagrangian is invariant under some group, this does not imply that the
quantum field theories we construct from it are also invariant, because as
Fujikawa {\cite{Fujikawa}} pointed out we also need to choose a Feynman
measure and there may not be an invariant way of doing this. The obstructions
to finding an invariant quantum field theory lie inside certain cohomology
groups and are called anomalies. We show that if these anomalies vanish then
we can construct invariant quantum field theories.

Suppose that a group $G$ acts on $SJ \Phi$ and preserves the set of
Feynman measures with given cut local propagator, and suppose that we have
chosen one such Feynman measure $\omega$. In practice we often start with an
action of a Lie algebra or superalgebra, such as that generated by the BRST
operator, which can be turned into a group action in the usual way by working
over a ring with nilpotent elements. If $g \in G$ then $g \omega$ is another
Feynman measure with the same propagator, so
\[ \omega = \rho_g g \omega \]
for a unique renormalization $\rho_g$. This defines a non-abelian 1-cocycle:
$\rho_{gh} = \rho_g g (\rho_h)$, where $g (\rho_h) = g \rho_h g^{- 1}$. Since
$\omega$ is invariant under $\rho_g g$, we find that
\[ \omega (e^{iL} A_1) = \omega (\rho_g g (e^{iL} A_1)) = \omega (e^{iL} e^{-
   iL} \rho_g g (e^{iL} A_1)) \]
so that $e^{- L} \omega$ is invariant under the transformation taking
arguments $A_1$ to $e^{- iL} \rho_g g (e^{iL} A_1)$. This transformation fixes
$1$ if $e^{iL}$ is fixed by $\rho_g g$. If in addition $\rho_g g (e^{iL} A_1)
= \rho_g g (e^{iL}) \rho_g g (A_1)$ (which is not automatic as $\rho_g$ need
not preserve products) then $A_1$ is taken to $\rho_g g (A_1)$ by this
transformation.

This shows that we really want a Lagrangian $L$ such that $e^{iL}$ is
invariant under the modified action $e^{iL} \rightarrow \rho_g g (e^{iL})$.
This is not the same as asking for $\rho_g g (iL) = iL$ because $\rho_g$ need
not preserve products (although $g$ usually does). In practice we usually have
a Lagrangian $L$ with $L$ (and $e^{iL}$) invariant under $G$, and the problem
is whether it can be modified to $L'$ so that $e^{iL'}$ is invariant under the
twisted action. The powers of $L$ span a coalgebra all of whose elements are
$G$-invariant. Conversely, given a coalgebra $C$ all of whose elements are
invariant under some group action, there is a canonical $G$-invariant
group-like element associated to this coalgebra with coefficients in the dual
algebra of $C$. So a fundamental question is whether the maximal coalgebra in
the space of $G$-invariant classical actions is isomorphic to the maximal
coalgebra in the space of actions invariant under the twisted action of $G$.

The simplest case is when one can find a $G$-invariant Feynman measure, in
which case the cocycle is trivial and the twisted action of $G$ is the same as
the untwisted action. In terms of the cocycle \ above, $\rho \omega$ is
invariant for some renormalization $\omega$ if and only if $\rho_g = \rho^{-
1} g (\rho)$ for all $g$ (where $g (\rho) = g \rho g^{- 1}$), in other words
there is an invariant measure $\omega$ if and only if the cocycle is a
coboundary. \ This case happens, for example, when spacetime $M$ is Minkowski
space and $G$ is the Lorentz or Poincare group (or one of their double
covers). Dimensional regularization in this case is automatically
$G$-invariant, and so gives a $G$-invariant Feynman measure.

In the case of BRST operators, there need not be any $G$-invariant Feynman
measure. In this case the following theorem shows that one can find suitable
coalgebras provided that certain obstructions, called anomalies, all vanish.
The renormalizations $\rho_g$ need not preserve products in $S \Gamma \omega
SJ \Phi$, but do preserve the coproduct and also fix all elements of
$\Gamma \omega SJ \Phi$ if they are normalized as in example
\ref{simple operator}. So we have an action of $G$ on the space $V = \Gamma
\omega SJ \Phi$, which lifts to two different actions of the coalgebra
$\tmop{SV}$, the first $\sigma_1 (g)$ preserving the product, and the second
$\sigma_2 (g) = \rho_g \sigma_1 (g)$ given by twisting the first by the
cocycle $\rho_g$.

\begin{theorem}
  Suppose that $V$ is a real vector space acted on by a group $G$, and there
  are two extensions $\sigma_1$. $\sigma_2$ of this action to the coalgebra
  $SV$. \ If the cohomology group $H^1 (G, V)$ vanishes then the maximal
  coalgebras in $SV^{}$ whose elements are fixed by these 2 actions of $G$ are
  isomorphic under an isomorphism fixing the elements of $V$. 
\end{theorem}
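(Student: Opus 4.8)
I would translate the statement into one about formal fixed-point loci, described through group-like points, and then straighten those loci order by order; the hypothesis $H^1(G,V)=0$ will enter only after restricting to the trivially-acted subspace $V^G$. Using the convention that we adjoin nilpotents to the base, a group-like element of $SV\otimes R$ (for $R$ a nilpotent extension of $\tmmathbf{R}$ with maximal ideal $\mathfrak m$) is exactly $\exp(\xi)$ for a primitive $\xi\in V\otimes\mathfrak m$, since the primitives of $SV$ are $V$ and the logarithm of a group-like element is primitive. A coalgebra automorphism of $SV$ restricting to the given action on $V=S^1V$ therefore induces a transformation of the primitive variable $\xi$ whose linear part is the given action of $G$. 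For $\sigma_1$, which preserves products, this induced action is the linear one $\xi\mapsto g\xi$, so its fixed primitives form $V^G\otimes\mathfrak m$ and the maximal fixed coalgebra $C_{\sigma_1}$ is $S(V^G)$. For $\sigma_2=\rho_g\sigma_1$ it has the form $\xi\mapsto\phi_g(g\xi)$ with $\phi_g=\mathrm{id}+O(\xi^2)$ a formal map determined by the renormalization $\rho_g$ (which fixes $V$, by the structure of renormalizations in Theorem \ref{uvgroupstructure}); its fixed primitives form a formal locus $F_{\sigma_2}=\{\xi:\phi_g(g\xi)=\xi\ \forall g\}$ with tangent space $V^G$, and $C_{\sigma_2}$ is the corresponding maximal fixed coalgebra.

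I would then construct a formal automorphism $\Psi(\zeta)=\zeta+\sum_{k\ge2}\Psi^{(k)}(\zeta)$ of the primitive variable, with $\Psi^{(k)}\in\tmop{Hom}(S^kV,V)$ and identity linear part, carrying the $\sigma_1$-fixed set $V^G\otimes\mathfrak m$ onto $F_{\sigma_2}$; dually this is a coalgebra isomorphism $C_{\sigma_1}\to C_{\sigma_2}$ fixing $V$. Building $\Psi$ by induction on degree, I arrange that $E_g(\zeta):=\phi_g(g\,\Psi(\zeta))-\Psi(\zeta)$ vanishes for $\zeta\in V^G$. Having killed $E_g$ through degree $n$, the degree $n+1$ part restricted to $\zeta\in S^{n+1}V^G$ equals $(g-1)\Psi^{(n+1)}(\zeta)+\theta_g(\zeta)$, where $\theta_g\in\tmop{Hom}(S^{n+1}V^G,V)$ is a known expression in the lower-order data. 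Because $\sigma_2$ is a genuine action of $G$, expanding the relation $\phi_{gh}((gh)\xi)=\phi_g\bigl(g\,\phi_h(h\xi)\bigr)$ to this order shows $\theta_{gh}=\theta_g+g\,\theta_h$, so $\theta$ is a $1$-cocycle valued in $\tmop{Hom}(S^{n+1}V^G,V)$.

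The crucial point, and the main obstacle, is that this obstruction, although it would a priori live in the large $G$-module $\tmop{Hom}(S^{n+1}V,V)$, has been restricted to the fixed subspace $V^G$, on which $G$ acts trivially. Hence $\tmop{Hom}(S^{n+1}V^G,V)$ is a direct sum of copies of $V$ as a $G$-module, so $H^1(G,\tmop{Hom}(S^{n+1}V^G,V))\cong H^1(G,V)\otimes(S^{n+1}V^G)^{\ast}=0$. Therefore $\theta$ is a coboundary, and I can choose $\Psi^{(n+1)}$ on $S^{n+1}V^G$ (extended arbitrarily to $S^{n+1}V$) to make $E_g$ vanish through degree $n+1$, completing the inductive step. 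This is precisely where, and the only place where, the hypothesis is used: it is essential that we never require the higher groups $H^1(G,\tmop{Hom}(S^{n+1}V,V))$, which a naive attempt to intertwine the two full actions $\sigma_1,\sigma_2$ on all of $SV$ would demand, and which do not follow from $H^1(G,V)=0$ alone.

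Finally I would check maximality, namely that $\Psi$ lands onto all of $F_{\sigma_2}$ and not merely a proper sub-locus. Since $\Psi$ has identity linear part, the equations cutting out $F_{\sigma_2}$ still have linear parts $g-1$, whose common kernel is exactly $V^G$; thus $F_{\sigma_2}$ has tangent space $V^G$ and contains the smooth formal disk $\Psi(V^G\otimes\mathfrak m)$ of the same dimension, which forces equality. Dualizing back to coalgebras, $\Psi$ yields the required isomorphism $C_{\sigma_1}\cong C_{\sigma_2}$ fixing the elements of $V$. I expect the bookkeeping in the inductive step (verifying the cocycle identity for $\theta_g$ and the precise form of the degree $n+1$ term) to be the only technical labor, while the conceptual content is the reduction of all obstructions to $H^1(G,V)$ by restriction to $V^G$.
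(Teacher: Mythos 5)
Your inductive engine is the same as the paper's---a degree-by-degree construction in which the obstruction at order $n+1$ is a $1$-cocycle reduced to $H^1(G,V)=0$---but the packaging is genuinely dual, and the two versions justify the key reduction differently. The paper works directly with elements of the maximal invariant coalgebra: for an invariant $a$ of degree $n$ it seeks $v\in V$ with $\sigma_1(g)(a+v)=\sigma_2(g)(a)+v$, and the whole weight of the argument is carried by a primitivity lemma: since the lower-degree coproduct components $b_i, c_i$ of $a$ are invariant for both (already coinciding) actions, comparing $\Delta(\sigma_2(g)a)$ with $\Delta(a)$ shows $\sigma_2(g)a-a$ is primitive, hence lies in $V$, so the obstruction is literally a $V$-valued cocycle with no module-theoretic reduction needed. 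You instead dualize to formal geometry via group-like points (a dictionary the paper itself licenses in its remarks on group-like elements and the canonical group-like with coefficients in the dual algebra): the extensions become formal automorphisms $\xi\mapsto\phi_g(g\xi)$ of the primitive variable, the maximal fixed coalgebras become formal fixed loci, and the isomorphism becomes a straightening map $\Psi$ built order by order, with obstruction $\theta_g\in\tmop{Hom}(S^{n+1}V^G,V)$ killed because $G$ acts trivially on the source. Where the paper uses the coproduct, you use triviality of the module $S^{n+1}V^G$; your version makes explicit the geometric picture (the invariant quantum actions form a straightened formal disk) and the verification of the cocycle identity from the action law, while the paper's elementwise correction $a\mapsto a+v$ needs no separate surjectivity discussion, since the map is defined on the maximal coalgebra from the start.

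Two points in your write-up need repair, both patchable. First, you assume $\sigma_1$ preserves products; the statement grants only two coalgebra extensions, and the paper's proof never uses multiplicativity of either. Since your argument makes no multiplicativity assumption on $\sigma_2$, the fix is to run it twice, comparing each $\sigma_i$ with the product-preserving linear extension of the $G$-action, and compose the two straightenings. Second, your maximality step (``same tangent space forces equality'') is a finite-dimensional argument: to conclude $I_{F_{\sigma_2}}=I_{\Psi(V^G)}$ one needs, for each coordinate of a complement of $V^G$, an element of the ideal of $F_{\sigma_2}$ with that linear part, i.e.\ that the functionals $\lambda\circ(g-1)$ span the full annihilator of $V^G$ rather than merely having common kernel $V^G$; for the infinite-dimensional $V=\Gamma\omega SJ\Phi$ of the intended application this is not automatic, and the scheme-theoretic fixed locus of a single formal automorphism can be non-reduced (e.g.\ $x\mapsto x+x^2$ has fixed subscheme $x^2=0$), so ``tangent space $V^G$ plus a smooth disk inside'' does not by itself force equality. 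The safer route is the paper's implicit one: prove surjectivity by the symmetric elementwise cocycle argument, correcting any element of the $\sigma_2$-maximal coalgebra back by a $v\in V$. Finally, your identification $H^1(G,\tmop{Hom}(W,V))\cong H^1(G,V)\otimes W^{\ast}$ is loose for infinite-dimensional $W$, but the conclusion survives: each $w\in W$ gives a $V$-valued cocycle, and since $Z^1(G,V)=B^1(G,V)\cong V/V^G$ one can choose the trivializing element linearly in $w$ via a splitting of $V\to V/V^G$ --- or avoid the issue entirely by working pointwise in $a$, as the paper does.
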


\begin{proof}
  We construct an isomorphism $f$ from the maximal coalgebra in the space \ of
  $\sigma_1$-invariant elements to the maximal coalgebra in the space of
  $\sigma_2$-invariant elements by induction on the degree of elements. We
  start by taking $f$ to be the identity map on elements of degree at most 1.
  We can assume that the 2 actions coincide on elements of degree less than
  $n$, and have to find an isomorphism $f$ making them the same on elements of
  degree $n$, which we will do by adding elements of $V$ to a basis of the
  elements of degree $n$. Suppose that $a$ is an element of degree $n > 1$
  contained in a coalgebra of $G$-invariant elements. We want to find $v \in
  V$ so that
  \[ \sigma_1 (g) (a + v) = \sigma_2 (g) (a) + v \]
  or equivalently
  \[ \sigma_1 (g) (v) - v = \sigma_2 (g) (a) - a. \]
  The right hand side, as a function of $g$, is a 1-coboundary of an element
  $a \in SV$, and therefore a 1-cocycle. We show that the right hand side is
  in $V$. We have
  \[ \Delta (a) = a \otimes 1 + 1 \otimes a + \sum_i b_i \otimes c_i \]
  for some \ elements $b_i$ and $c_i$ of degrees less than $n$ invariant under
  $G$ (for both actions, which coincide on elements of degree less than $n$).
  Applying \ $\sigma_2$ we find that $\Delta (\sigma_2 (g) a) = \sigma_2 (g) a
  \otimes 1 + 1 \otimes \sigma_2 (g) a + \sum_i b_i \otimes c_i$, so
  subtracting these two identities shows that $\sigma_2 (g) (a) - a$ is a
  primitive element of $\tmop{SV}$ and therefore in $V$. Therefore the right
  hand side, as a function of $g$, is a 1-cocycle with values in $V$. \ The
  solvability of the condition for $v$ says exactly that this expression is
  the \ coboundary of some element $v \in V$. In other words the obstruction
  to finding a suitable $v$ is exactly an element of the cohomology group $H^1
  (G, V)$, so as we assume this group vanishes we can always solve for $v$. 
\end{proof}

\begin{example}
  We take $V$ to be $\Gamma \omega SJ \Phi$, and $G$ to be some group
  acting on $V$. Then the spaces of classical and quantum actions are
  coalgebras acted on by $G$, whose primitive elements can be identified with
  $V$. If $H^1 (G, \Gamma \omega SJ \Phi)$ vanishes, then the maximal
  $G$-invariant coalgebra in the coalgebra of classical actions is isomorphic
  to the maximal $G$-invariant coalgebra in the coalgebra of quantum actions.
  So if $L$ is a $G$-invariant classical Lagrangian, then $e^L$ is a
  $G$-invariant classical action, so gives a $G$-invariant quantum action. One
  cannot get a $G$-invariant quantum action by exponentiating a $G$-invariant
  quantum Lagrangian because the space of quantum actions does not in general
  have a $G$-invariant product. 
\end{example}

\begin{example}
  Sometimes the group $G$ only fixes classical Lagrangians up to boundary
  terms, in other words the Lagrangian is a $G$-invariant element of $\Gamma
  \omega SJ \Phi / D$. In this case one replaces the cohomology group
  $H^1 (G, \Gamma \omega SJ \Phi)$ by $H^1 (G, \Gamma \omega SJ
  \Phi / D)$.
\end{example}

The element $e^{iL_F}$ lies in the completion of $S \Gamma \omega SJ
\Phi$ and is fixed by the zeroth order part of the BRST operator. So the BRST
operator acts on $e^{iL_F} S \Gamma \omega SJ \Phi$.

The groups $H^1 (G, \Gamma \omega SJ \Phi) \tmop{and} H^1 (G, \Gamma
\omega SJ \Phi / D)$ (and their variations for Poincare invariant
Lagrangians) for the BRST operators of gauge theories have been calculated in
many cases, at least for the case of Minkowski space \ (see for example
Barnich, Brandt, and Henneaux {\cite{Barnich}}) and are sometimes zero, in
which case \ corresponding invariant \ quantum field theories exist.

\end{document}